\documentclass[acmtog]{acmart}

\setcopyright{none}
\renewcommand\footnotetextcopyrightpermission[1]{}
\AtBeginDocument{\pagestyle{plain}}

\usepackage{amsmath}
\usepackage{amsthm}
\theoremstyle{plain}
\newtheorem{proposition}{Proposition}
\usepackage{graphicx}
\graphicspath{{figures/}{Paper/figures/}}
\usepackage{booktabs}
\usepackage{tabularx}
\usepackage{wrapfig}
\usepackage{float}
\usepackage{placeins}

\newlength{\teasergap}

\usepackage{xr-hyper}
\begin{document}

\title{Hierarchical Transformer Preconditioning for Interactive Physics Simulation}

\author{Carl Osborne}
\affiliation{%
  \institution{MIT CSAIL}
  \city{Cambridge}
  \country{USA}}
\email{osbo@mit.edu}

\author{Minghao Guo}
\affiliation{%
  \institution{MIT CSAIL}
  \city{Cambridge}
  \country{USA}}
\email{guomh2014@gmail.com}

\author{Crystal Owens}
\affiliation{%
  \institution{MIT CSAIL}
  \city{Cambridge}
  \country{USA}}
\email{crystalo@mit.edu}

\author{Wojciech Matusik}
\affiliation{%
  \institution{MIT CSAIL}
  \city{Cambridge}
  \country{USA}}
\email{wojciech@csail.mit.edu}

\renewcommand{\shortauthors}{Osborne et al.}

\begin{abstract}
Neural preconditioners for real-time physics simulation offer promising data-driven priors, but
they often fail to capture long-range couplings efficiently because they inherit local message
passing or sparse-operator access patterns. We introduce the Hierarchical Transformer
Preconditioner, a neural preconditioner anchored to a weak-admissibility $\mathcal{H}$-matrix
partition. The partition provides a multiscale structural prior (dense diagonal leaves plus
coarsening off-diagonal tiles) that enables full-graph approximate-inverse computation with
$O(N)$ scaling at fixed block sizes. The network models the inverse through low-rank far-field
factors and uses highway connections (axial buffers plus a global summary token) to propagate
context across transformer depth. At each PCG iteration, preconditioner application reduces to
batched dense GEMMs with regular memory access.

The key training contribution is a cosine-Hutchinson probe objective that learns the action of
$MA$ on convergence-critical spectral subspaces, optimizing angular alignment of $MA\mathbf{z}$
with $\mathbf{z}$ rather than forcing eigenvalue clusters to a prescribed location. This removes
unnecessary spectral-placement constraints from SAI-style objectives and improves conditioning on
irregular spectra. Because both inference and apply are dense, dependency-free tensor programs,
the full solve loop is captured as a single CUDA Graph.

On stiff multiphase Poisson systems (up to $100\!:\!1$ density contrast,
$N\!=\!1{,}024$--$16{,}384$), the solver runs from $\sim\!143$ to $\sim\!21$ fps. At
$N\!=\!8{,}192$, it reaches $17.9$ ms/frame, with $2.2\times$ speedup over GPU Jacobi,
$\sim\!28\times$ over GPU IC/DILU (AMGX \textsc{multicolor\_dilu}), and $2.7\times$ over
neural SPAI retrained per scale on the same benchmark.
\end{abstract}

\begin{CCSXML}
<ccs2012>
 <concept>
  <concept_id>10010147.10010341.10010342.10010343</concept_id>
  <concept_desc>Computing methodologies~Physical simulation</concept_desc>
  <concept_significance>500</concept_significance>
 </concept>
 <concept>
  <concept_id>10010147.10010257.10010293.10010294</concept_id>
  <concept_desc>Computing methodologies~Neural networks</concept_desc>
  <concept_significance>300</concept_significance>
 </concept>
</ccs2012>
\end{CCSXML}
\ccsdesc[500]{Computing methodologies~Physical simulation}
\ccsdesc[300]{Computing methodologies~Neural networks}

\keywords{preconditioning, iterative solvers, hierarchical matrices, transformers,
neural operators, physics simulation, real-time graphics}

\begin{teaserfigure}
  \centering
  \setlength{\teasergap}{3pt}%
  \setlength{\fboxsep}{\teasergap}%
  \fbox{\begin{minipage}{\dimexpr\linewidth-2\fboxsep-2\fboxrule\relax}
    \vspace{\teasergap}%
    \centering
    \begin{tabular}{@{\hspace{\teasergap}}>{\centering\arraybackslash}m{\dimexpr(\linewidth-5\teasergap)/4\relax}@{\hspace{\teasergap}}>{\centering\arraybackslash}m{\dimexpr(\linewidth-5\teasergap)/4\relax}@{\hspace{\teasergap}}>{\centering\arraybackslash}m{\dimexpr(\linewidth-5\teasergap)/4\relax}@{\hspace{\teasergap}}>{\centering\arraybackslash}m{\dimexpr(\linewidth-5\teasergap)/4\relax}@{\hspace{\teasergap}}}
      \includegraphics[width=\linewidth,keepaspectratio]{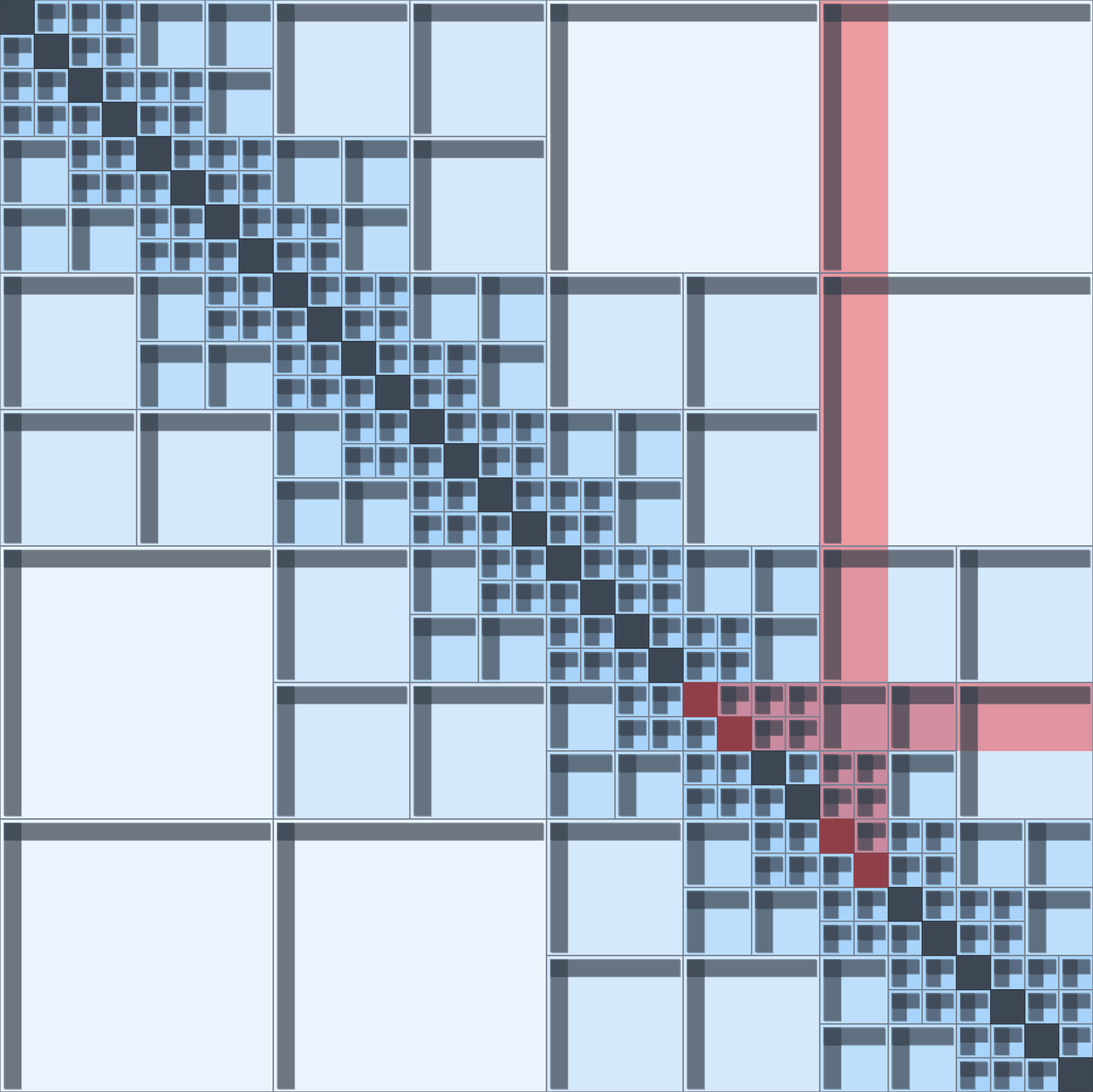} &
      \includegraphics[width=\linewidth,keepaspectratio]{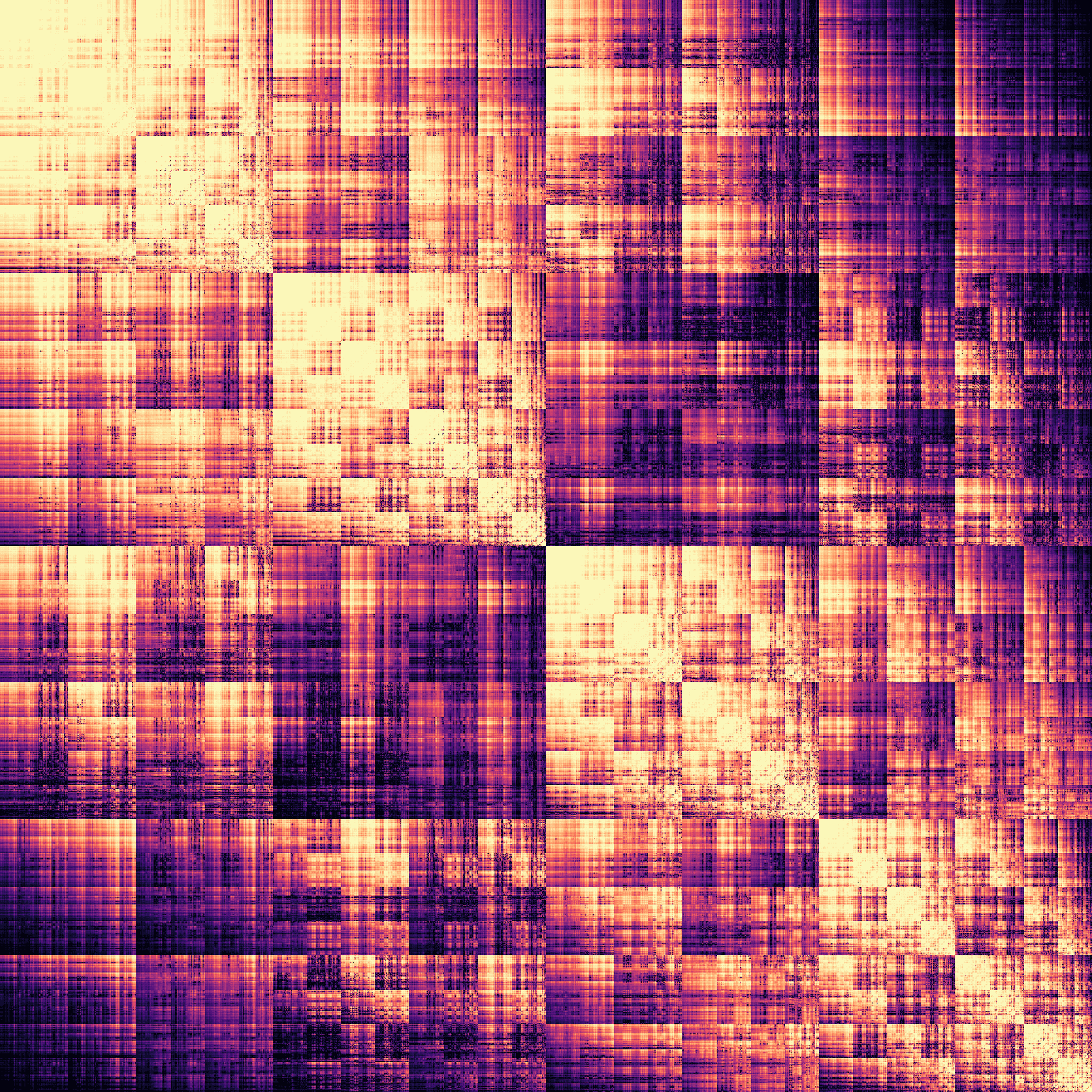} &
      \includegraphics[width=\linewidth,keepaspectratio]{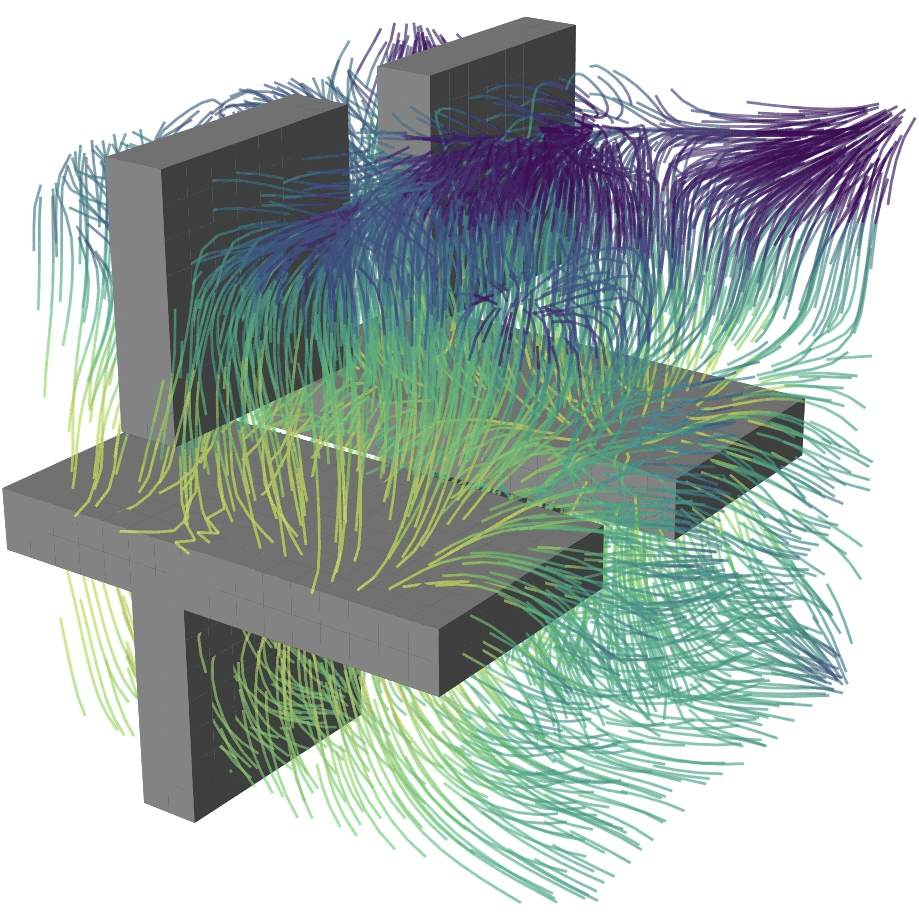} &
      \includegraphics[width=\linewidth,keepaspectratio]{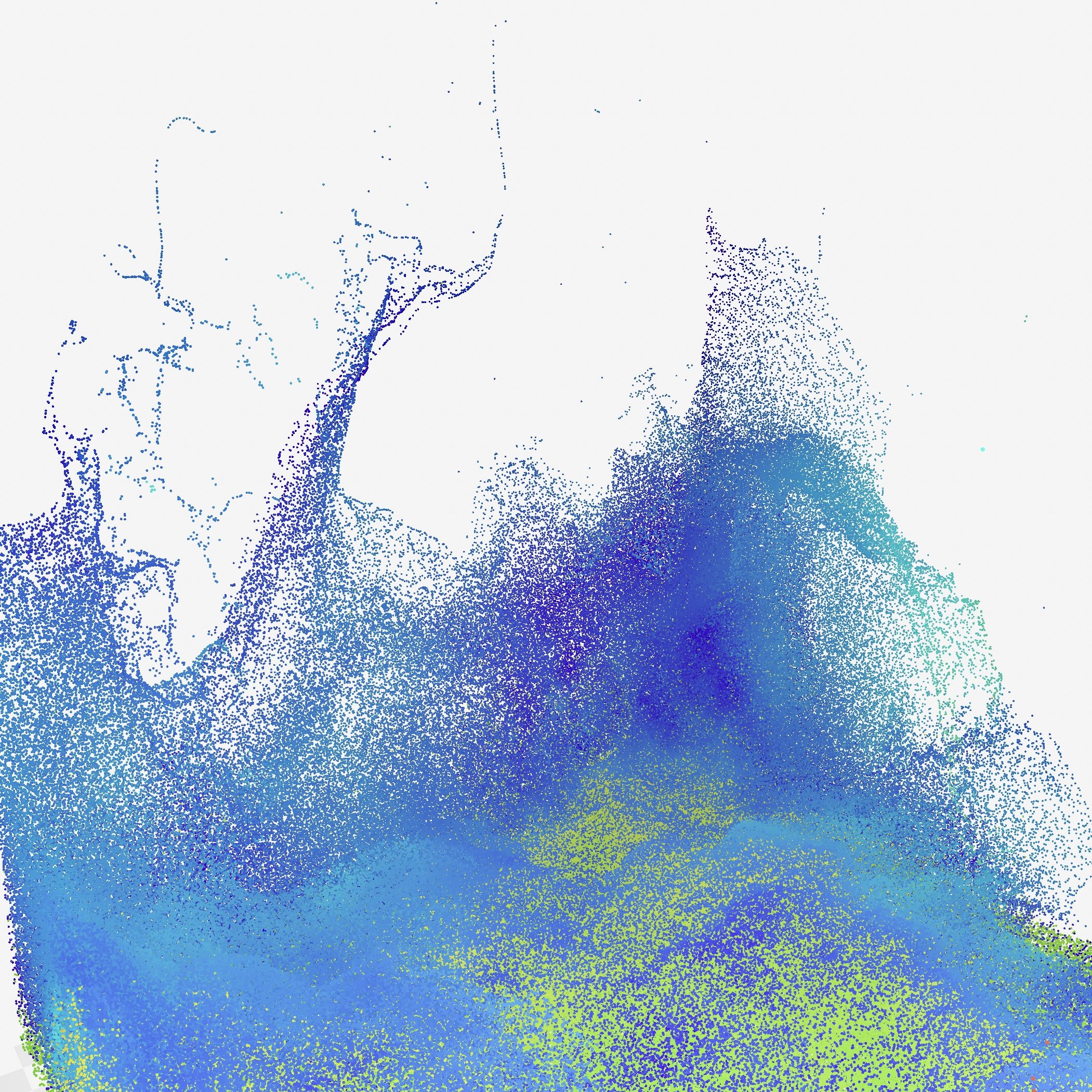}
    \end{tabular}%
    \vspace{\teasergap}%
  \end{minipage}}
  \caption{Teaser. Hierarchical neural preconditioning for interactive multiphase Poisson
    solves: (a) weak-admissibility $\mathcal{H}$-matrix prior, (b) learned preconditioner
    $M\!\approx\!A^{-1}$, (c) multiscale residual transport induced by $M$, and
    (d) 3D multiphase fluid simulation application context.}
  \label{fig:teaser}
\end{teaserfigure}

\maketitle
\thispagestyle{plain}

\section{Introduction}
\label{sec:intro}

Real-time simulation of fluids, soft bodies, and coupled multiphysics often reduces
every time step to a large sparse SPD solve attacked with preconditioned conjugate
gradient (PCG)~\cite{saad2003iterative}. The preconditioner $M\!\approx\!A^{-1}$ is the
dominant design variable. Classical choices --- Jacobi, incomplete Cholesky/LU
(IC/ILU), sparse approximate inverses (SPAI), and algebraic multigrid
(AMG)~\cite{lin1999incomplete,kolotilina1993factorized,benzi1996sparse,ruge1987algebraic}
--- excel when the matrix is reused across many solves so a heavy setup amortizes. In
an interactive setting $A$ changes every frame, so any preconditioner whose setup
approaches the per-frame budget cannot amortize; IC and AMG suffer
worst~\cite{naumov2015amgx,liu2016synchronization,yamazaki2020performance}.

Machine learning offers an alternative: train a neural preconditioner once and amortize
the cost across frames. Graph neural networks have been the dominant
choice~\cite{Li2023learning,hausner2023neural,chen2024gnp,Yang2025sparse} because a
sparse matrix has a natural graph interpretation, but the choice forces two limitations
that become acute in real-time simulation. (i) GNN preconditioners typically inherit
the sparsity of $A$, so non-adjacent interactions are either dropped or only reached
through repeated graph convolutions that oversmooth local
detail~\cite{chen2020simple,trifonov2025gnn}. (ii) The dominant operations at apply
time --- triangular solves on a learned IC, or two SpMVs on a learned SPAI --- are
bandwidth-bound and either serialize on data hazards or scatter through irregular
memory.

\paragraph{From PCG geometry to hierarchical structure.}
The ideal preconditioner is $A^{-1}$, which is dense even when $A$ is sparse, so the
question is what \emph{structure} makes $A^{-1}$ cheap to store and apply. In modern
simulation codes the degrees of freedom are already laid out along a space-filling
curve (Morton, Hilbert) or a bandwidth-reducing permutation, because the surrounding
pipeline (spatial hashing, BVH, neighbor search) wants that
ordering~\cite{karras2012maximizing,teschner2003optimized,ihmsen2011parallel}. Under
such orderings the nonzeros of $A$ cluster near the diagonal and the corresponding
off-diagonal blocks of $A^{-1}$ describe long-range interactions whose numerical rank
decays with separation. \emph{Hierarchical matrices}
($\mathcal{H}$-matrices)~\cite{hackbusch1999sparse,hackbusch2002adaptive,borm2003introduction}
exploit exactly this --- a recursive block partition with dense diagonal blocks and
low-rank off-diagonal blocks --- and the fast multipole method~\cite{greengard1987fast}
factors the same prior for translation-invariant kernels. An $\mathcal{H}$-matrix
preconditioner is architecturally an FMM-shaped operator on the assembled system, and
that is the prior we hand the network.

\paragraph{Approach.}\label{intro:hierarchical}\hypertarget{intro:hierarchical}{}%
We anchor the preconditioner to a weak-admissibility $\mathcal{H}$-matrix
partition~\cite{hackbusch2002adaptive}, computed analytically from the leaf-index
geometry and reused for every frame: $K\!=\!N/L$ dense diagonal blocks of size
$L\!\times\!L$ plus off-diagonal tiles spanning $S\!\times\!S$ leaves with $S\!=\!2^j$.
A two-stream transformer operates on this layout. A \emph{diagonal} stream emits one
factor per leaf at full rank; an \emph{off-diagonal} stream emits one factor per tile
at a constant coarse token count $L_s\!\ll\!L$ regardless of $S$, so the rank fraction
$L_s/(SL)$ shrinks automatically with separation --- exactly the bias $\mathcal{H}$-matrix
compression calls for. To let the network route information between blocks without
breaking the $\Theta(N)$ budget, we add per-layer \emph{highway} buffers (axial
row/column bands plus a single global token, \S\ref{sec:method:highways}).

\paragraph{A loss with no preferred eigenvalue location.}\label{intro:loss}\hypertarget{intro:loss}{}%
We train self-supervised with a cosine-similarity Hutchinson probe loss on the
preconditioned image $MA\mathbf{z}$. The conceptual move over the Frobenius/SAI loss
used by~\cite{Yang2025sparse} is to replace a \emph{distance between vectors}
($\|\tfrac{1}{\|A\|}AM\mathbf{z}-\mathbf{z}\|_2$) with a \emph{distance between
subspaces} (the angle between the lines spanned by $\mathbf{z}$ and $MA\mathbf{z}$).
This is invariant under positive rescaling of $M$, exactly matching PCG, which cares
about the direction $MA$ sends each probe and not the magnitude. The network is then
free to cluster the spectrum of $MA$ wherever angular alignment is easiest, with no
implicit demand that the cluster live near any particular eigenvalue
(Prop.~\ref{prop:cos-subspace}; spectral evidence in Fig.~\ref{fig:loss:eigenvalues}).
Empirically, this is not a minor objective tweak: with architecture and training split fixed,
the loss swap alone accounts for a major quality gap in Table~\ref{tab:generalization-vs-baselines}.

\paragraph{One CUDA Graph for the whole iteration.}\label{intro:cudagraph}\hypertarget{intro:cudagraph}{}%
The choices above together remove the data dependencies that normally fragment a CG
iteration into separately dispatched kernels: the partition is fixed, every tile shape
is the same, the apply is batched GEMMs over preallocated tensors with no triangular
solves and no allocations. The whole PCG inner loop --- preconditioner apply, SpMV,
SAXPYs, dot reductions --- records into one CUDA Graph that replays per iteration
with zero CPU dispatch overhead. In our measurements this is a substantial share of the
gap to classical GPU preconditioners.

\paragraph{Contributions.} In short:
\textbf{[\textsc{c1}]} a neural preconditioner that bakes the $\mathcal{H}$-matrix block
partition in as a structural prior with a fixed coarse rank on every off-diagonal tile,
keeping total work at $\Theta(N)$ (intro paragraph
\hyperref[intro:hierarchical]{\textsc{c1}}, \S\ref{sec:method:stacks});
\textbf{[\textsc{c2}]} highway buffers that route information \emph{between} blocks
without growing the budget (\S\ref{sec:method:highways});
\textbf{[\textsc{c3}]} a cosine-similarity Hutchinson loss that minimizes a distance
between \emph{subspaces} rather than between vectors, exactly matching PCG's
scale-invariance and freeing the network to put the spectrum of $MA$ wherever
convergence is easiest (intro paragraph \hyperref[intro:loss]{\textsc{c3}},
Prop.~\ref{prop:cos-subspace} in \S\ref{sec:training:loss});
\textbf{[\textsc{c4}]} an allocation-free apply with no triangular solves, no
data-dependent branching, no per-kernel CPU dispatch --- the whole PCG inner loop
captured as one CUDA Graph (intro paragraph
\hyperref[intro:cudagraph]{\textsc{c4}}, \S\ref{sec:application}).
Table~\ref{tab:apply_complexity} summarizes the per-iteration access pattern and
throughput limit of each preconditioner family at the one step where every fixed-pattern
alternative loses time on modern GPUs.

\begin{table}[t]
\centering
\caption{Per-iteration preconditioner application: arithmetic cost, dominant memory-access
pattern, GPU throughput limit, and whether the apply admits single-CUDA-Graph capture of
the PCG inner loop. $\mathrm{nnz}$ is the nonzero count of the relevant factor.
``$L_s$'' is our fixed coarse-token count per off-diagonal tile.}
\label{tab:apply_complexity}
\footnotesize\setlength{\tabcolsep}{4pt}
\begin{tabularx}{\linewidth}{@{}>{\raggedright\arraybackslash}p{0.20\linewidth}
  >{\raggedright\arraybackslash}p{0.14\linewidth}
  >{\raggedright\arraybackslash}X
  >{\raggedright\arraybackslash}p{0.18\linewidth}
  c@{}}
\toprule
Method & FLOPs & Memory access & GPU bottleneck & Graph \\
\midrule
Jacobi             & $O(N)$           & stride-1 read           & memory bandwidth        & \checkmark \\
IC (tri.\ solve)   & $O(\mathrm{nnz})$ & wavefront-dep.\ SpTRSV  & latency / data hazards  & $\times$ \\
Neural SPAI~\cite{Yang2025sparse} & $O(\mathrm{nnz})$ & random gather (SpMV)    & memory bandwidth        & \checkmark \\
AMG (V-cycle)      & $O(N\log N)$     & multilevel, irregular   & level synchronization   & $\times$ \\
\textbf{Ours}      & $O(NL_s)$        & batched GEMM, stride-1  & compute (Tensor Cores)  & \checkmark \\
\bottomrule
\end{tabularx}
\end{table}

\section{Related Work}
\label{sec:related}

\paragraph{Classical preconditioners.}
Jacobi, incomplete Cholesky/LU, sparse approximate inverses (SPAI), and algebraic
multigrid~\cite{saad2003iterative,lin1999incomplete,kolotilina1993factorized,benzi1996sparse,ruge1987algebraic,naumov2015amgx}
are mature and remain default choices when the system is reused across many solves so a
careful setup amortizes. The known GPU costs of triangular
solves~\cite{liu2016synchronization,yamazaki2020performance} and of V-cycle
communication~\cite{naumov2015amgx} bite hardest in the regime we target
($A$ changing every frame at interactive rates), which motivates the amortized neural
alternative pursued here.

\paragraph{Learned PDE solvers and preconditioners.}
Neural operators (Fourier~\cite{li2021fourier}, DeepONet~\cite{lu2021learning},
graph-based~\cite{li2020neural,li2020multipole}) and physics-informed
networks~\cite{raissi2019physics} learn function-space mappings between coefficient and
solution fields, and have been used both as surrogates and as inner preconditioners for
flexible Krylov solvers~\cite{rudikov2024neural}; we operate one rung lower, on the
assembled algebraic system. A productive line of work trains GNNs to emit a learned IC
factor~\cite{Li2023learning,hausner2023neural,trifonov2024linear} or a sparse approximate
inverse in factored $GG^\top$ form using a scale-invariant Frobenius (SAI)
loss~\cite{Yang2025sparse}; we compare to the latter directly in
Table~\ref{tab:full_perf_table_main}. Chen~\cite{chen2024gnp} reports strong performance
across SuiteSparse problems where classical IC and AMG struggle, and
Trifonov~\emph{et al.}~\cite{trifonov2025gnn} argue that message-passing GNNs cannot
approximate the non-local elimination structure of sparse triangular factors --- a
motivation for architectures (like ours) that route global information through explicit
channels.

\paragraph{Hierarchical neural and matrix machinery.}
A parallel line embeds FMM/$\mathcal{H}$-matrix structure into neural
architectures~\cite{Fan2019multiscale,li2020multipole,Fognini2025neural,Sittoni2026neural,Xu2025neural,Lyu2026multigrid,Luz2020learning}.
Classical $\mathcal{H}$- and $\mathcal{H}^2$-matrix
arithmetic~\cite{hackbusch1999sparse,hackbusch2002adaptive,borm2003introduction} provides
near-linear-cost machinery for storing and applying operators with low-rank
well-separated sub-blocks; HODLR variants~\cite{Hartland2023hodlr} apply the same prior
to dense Hessians. The partitioning machinery is what we instantiate; the contribution
lies in how the off-diagonal factors are produced and applied (\S\ref{sec:method}).

\section{Background}
\label{sec:background}

CG converges through the polynomial bound
\begin{equation}
  \|\mathbf{e}_k\|_A
  \;\le\; \|\mathbf{e}_0\|_A\,\min_{p\in\mathcal{P}_k,\,p(0)=1}\;
          \max_{\lambda\in\sigma(A)}|p(\lambda)|,
  \label{eq:cg-poly}
\end{equation}
so CG only sees $A$ through how small a polynomial fixed to $1$ at the origin can be made on
$\sigma(A)$. The practical consequence we lean on is that when $\sigma(MA)$ collapses to $c$
tight clusters away from zero, CG converges in $\sim\!c$ iterations regardless of where on
the real line those clusters sit. The preconditioning objective is therefore ``cluster
$\sigma(MA)$,'' not ``make $MA$ close to $I$,'' with cluster \emph{location} essentially
free --- the property our cosine loss exploits.

An $\mathcal{H}$-matrix~\cite{hackbusch1999sparse,hackbusch2002adaptive,borm2003introduction}
represents a dense matrix as a recursive block partition: dense diagonal blocks plus
off-diagonal blocks stored as low-rank factors $UV^\top$, with admissible rank shrinking
off the diagonal because the underlying Green's function is smooth there. We use the
\emph{weak-admissibility} variant~\cite{hackbusch2002adaptive,Hartland2023hodlr}, whose
fewer, larger off-diagonal tiles batch well on a GPU; classical $\mathcal{H}$-matrix
arithmetic costs $O(N\log N)$, our learned realization runs at $\Theta(N)$ thanks to the
fixed coarse-token count in~$\hyperref[intro:hierarchical]{\textsc{c1}}$.

The training loss uses Hutchinson-style probes $\mathbf{z}$ with
$\mathbb{E}[\mathbf{z}\mathbf{z}^\top]\!=\!I$~\cite{hutchinson1989stochastic}, but not as a
trace estimator. The property we use is \emph{spectral whiteness}: isotropy gives
$\mathbb{E}[(\mathbf{u}_i^\top\mathbf{z})^2]\!=\!1$ for every eigenvector $\mathbf{u}_i$ of
$MA$, so a scalar built from $(\mathbf{z},MA\mathbf{z})$ weighs every eigenmode equally
rather than favoring smooth or oscillatory ones --- the property that makes the cosine loss
of \hyperref[intro:loss]{\textsc{c3}} a faithful global indicator of preconditioner quality.

\section{Method}
\label{sec:method}

\paragraph{Notation and pipeline.}
$A\!\in\!\mathbb{R}^{N\times N}$ is the assembled sparse SPD system, indexed along a
space-filling-curve or bandwidth-reducing ordering. The $N$ indices are partitioned into
$K\!=\!N/L$ contiguous \emph{leaves} of size $L$, inducing an $\mathcal{H}$-matrix
partition of $K$ diagonal blocks and $M_{\mathcal{H}}$ unique weakly-admissible
off-diagonal tiles (tile $m$ spans $S_m\!\times\!S_m$ leaves). The network is a four-stage
pipeline: a physics-aware encoder, a diagonal attention stack (full per-node resolution
$L$), an off-diagonal attention stack (coarse token count $L_s\!\ll\!L$, the same for
every tile), and linear decoder heads. The whole network dispatches \emph{exactly two}
attention kernels regardless of $N$ --- one batched over leaves, one batched over tiles
--- which, together with the static partition, is what later lets the entire PCG inner
loop run inside a single CUDA Graph. We use $L\!=\!128, L_s\!=\!32, d\!=\!128, n_l\!=\!3$
in every reported result.

\subsection{Encoder}
\label{sec:method:encoder}
A two-layer MLP lifts whichever per-node features the simulator already exposes
(position, velocity, density, pressure, material parameters) plus the geometric and
coupling features $(\Delta\mathbf{x}_{ij}, A_{ij})$ for every nonzero of $A$ to width-$d$
node embeddings; $n_{\mathrm{gcn}}\!=\!2$ residual graph-convolutional
layers~\cite{kipf2017semi} mix in neighborhood information using $A$ as the
message-passing weight. This is the only stage that observes individual graph edges;
everything downstream operates on the block layout of the partition, decoupling cost
from $\mathrm{nnz}(A)$. Because $M$ is neural rather than algebraic, the network can
look at $(\rho_i, \Delta\mathbf{x}_{ij})$ directly --- physical structure a fixed
analytical preconditioner cannot see (full feature list in
Supplementary~\ref{SX-supp:architecture}).

\subsection{Diagonal and off-diagonal attention stacks}
\label{sec:method:stacks}
\noindent\textbf{Diagonal stack.} For each of the $K$ leaves, $n_l$ transformer
layers~\cite{Liu2021swin,vaswani2017attention} with within-leaf (Swin-style) attention
and edge-bias logits emit a dense factor $F_k\!\in\!\mathbb{R}^{L\times L}$. The
corresponding diagonal block of $M$ is the PSD outer product $F_kF_k^\top$.

\noindent\textbf{Off-diagonal stack.} For each off-diagonal tile $m$ of size
$S_mL\!\times\!S_mL$, node embeddings on its row- and column-strips are pooled to a
\emph{fixed} coarse token count $L_s\!=\!L/p_{\mathrm{off}}\!\ll\!L$ regardless of $S_m$.
$n_l$ transformer layers over these $L_s$ tokens emit a factor pair
$U_m,V_m\!\in\!\mathbb{R}^{L_s\times L_s}$; the tile's coarse representation is
$B_m\!=\!U_mV_m^\top$. Because the token count is the same on every tile, the implied
rank fraction $L_s/(S_mL)$ shrinks automatically with separation --- exactly the
$\mathcal{H}$-matrix prior that distant blocks need less rank. A rank-fraction audit on
real frames confirms the architecture-provided $L_s/(SL)$ stays above what truncated SVD
needs at $\varepsilon\!\in\!\{10^{-3},10^{-6},10^{-9}\}$ in every distance class
(Fig.~\ref{fig:rank-audit}; assembled-$M$ visualization in Supplementary, Fig.~\ref{SX-supp-fig:matrices}). With $S_m$ doubling
geometrically, $M_{\mathcal{H}}\!=\!O(K)$ unique off-diagonal tiles, and a tile-batched
attention of fixed shape, the total work over the off-diagonal stack stays linear in $N$.

\subsection{Highway connections}
\label{sec:method:highways}
Within-block attention is local by construction. To let the network route information
\emph{across} the matrix without giving up the $\Theta(N)$ budget, after every attention
sublayer we scatter-add block-token embeddings into three buffers per transformer layer:
a row-band axial buffer $\mathbf{r}_{\mathrm{hw}}$, a column-band axial buffer
$\mathbf{c}_{\mathrm{hw}}$, and a single global summary token $\mathbf{g}_{\mathrm{hw}}$.
Each token then concatenates its row, column, and global context with its own embedding
before the FFN sublayer. The four channels per layer (2D intra-block, 1D row, 1D column,
0D global) cost $O(Nd)$ scatter-gather and preserve $\Theta(N)$ scaling. We ablate the
highways in \S\ref{sec:results:training} and illustrate the per-layer connectivity in
Supplementary, Fig.~\ref{SX-supp-fig:partition-and-highways}b.

\subsection{Decoder, geometric picture, and complexity}
\label{sec:method:decoder}
Three linear decoder heads project tokens to the final factors: a leaf head emits
$F_k\!\in\!\mathbb{R}^{L\times L}$ (diagonal block $F_kF_k^\top$, PSD by construction);
two off-diagonal heads emit $U_m, V_m\!\in\!\mathbb{R}^{L_s\times L_s}$; two node heads
emit per-leaf bridges $\tilde U_k, \tilde V_k\!\in\!\mathbb{R}^{L\times L_s}$ between
node resolution and coarse tile resolution. A small learned per-node Jacobi gate
$\lambda_i$ adds a diagonal correction $\mathrm{diag}(\boldsymbol\lambda)\mathrm{diag}(A)^{-1}$
that absorbs gradient early in training and decays toward zero as the structured
branches take over (Supplementary~\ref{SX-supp:architecture}). In reported runs we do
not explicitly enforce strict SPD and did not observe instability; if a strict guarantee
is required, one can add a tiny positive diagonal shift (e.g., softplus-parameterized,
as in prior learned preconditioners~\cite{Yang2025sparse}). All factors pack into a
single tensor of width $P\!=\!KL^2+M_{\mathcal{H}}L_s^2+2NL_s+N$ that the apply consumes
without ever materializing $M$.

\paragraph{Complexity and measured breakdown.}
Encoder cost is $O(Nd^2)$, diagonal stack $O(n_lNLd)$, off-diagonal stack
$O(n_lNL_s^2d/L)$, highways $O(n_lNd)$; total inference is $\Theta(N)$. At
$N\!=\!8\,192$, the subsystem profile in Table~\ref{tab:breakdown-subsystem} shows that
the two attention stacks account for $59\%$ of device time, while encoder + decoder remain
secondary. Kernel-level profiling in Table~\ref{tab:breakdown-kernel} then explains
\emph{why}: CUTLASS Tensor-Op GEMMs ($31\%$) and fused attention ($14\%$) dominate, so the
workload is primarily compute-bound on Tensor Cores rather than bandwidth-bound on sparse
gathers. This directly supports the access-pattern argument in
Table~\ref{tab:apply_complexity}.

\begin{table}[t]
\centering
\caption{Subsystem breakdown of one forward pass at $N\!=\!8\,192$. Attention layers
scale with $n_l$; everything else is independent of depth.}
\label{tab:breakdown-subsystem}
\footnotesize\setlength{\tabcolsep}{4pt}
\begin{tabular}{@{}lrl@{}}
\toprule
Subsystem & \% device time & Scales with \\
\midrule
Diagonal attention stack     & $36\%$ & $n_l$ \\
Off-diagonal attention stack & $23\%$ & $n_l$ \\
Decoder heads                & $18\%$ & --- \\
Encoder (MLP + GCN)          & $16\%$ & --- \\
Layout helpers / scatter     & $<\!4\%$ & --- \\
\bottomrule
\end{tabular}
\end{table}

\begin{table}[t]
\centering
\caption{Kernel-level breakdown at $N\!=\!8\,192$. Dominant kernels are compute-bound
Tensor-Core GEMMs and fused attention; no SpMV or indexed-gather kernel appears among the
top kernels.}
\label{tab:breakdown-kernel}
\footnotesize\setlength{\tabcolsep}{4pt}
\begin{tabular}{@{}lr@{}}
\toprule
Kernel family & \% device time \\
\midrule
CUTLASS Tensor-Op GEMMs (\texttt{s1688gemm}, \texttt{sm90\_xmma}, TF32) & $31\%$ \\
CUTLASS fused attention (\texttt{fmha\_cutlassF}) & $14\%$ \\
Elementwise / normalization                       & $17\%$ \\
Copies / reshapes / layout                        & $17\%$ \\
Other                                             & $21\%$ \\
\bottomrule
\end{tabular}
\end{table}

\section{Preconditioner Application}
\label{sec:application}

$M$ is never assembled. The apply consumes the packed factor tensor and produces $M\mathbf{r}$
through three stages on preallocated buffers, with shapes fixed at solve setup by the static
partition. (i) A \emph{diagonal} stage applies $F_k$ to each leaf residual $\mathbf{r}_k$ in
one batched GEMM of shape $(K,L,L)$. (ii) An \emph{off-diagonal} FMM-style chain makes the
data movement explicit:
\begin{align}
  \hat{\mathbf{u}}_k &= \tilde U_k^\top\mathbf{r}_k,\qquad
  \hat{\mathbf{v}}_k = \tilde V_k^\top\mathbf{r}_k && \text{(restriction)} \\
  \mathbf{s}^{\mathrm{r}}_m &= \sum_{k\in\mathcal{R}_m}\hat{\mathbf{u}}_k,\qquad
  \mathbf{s}^{\mathrm{c}}_m = \sum_{k\in\mathcal{C}_m}\hat{\mathbf{v}}_k
  && \text{(strip aggregation)} \\
  \mathbf{t}^{\mathrm{c}}_m &= B_m^\top\mathbf{s}^{\mathrm{r}}_m,\qquad
  \mathbf{t}^{\mathrm{r}}_m = B_m\mathbf{s}^{\mathrm{c}}_m && \text{(coarse coupling)} \\
  \Delta\mathbf{y}_k &= \tilde U_k\,{\textstyle\sum_{m:k\in\mathcal{R}_m}\!\mathbf{t}^{\mathrm{r}}_m}
                    +  \tilde V_k\,{\textstyle\sum_{m:k\in\mathcal{C}_m}\!\mathbf{t}^{\mathrm{c}}_m}
  && \text{(prolongation).}
\end{align}
Stages 1/3/4 are batched GEMMs of shapes $(K,L,L_s)$ and
$(M_{\mathcal{H}},L_s,L_s)$ (with $(K,L,L_s)$ reused in prolongation);
stage 2 is two partition-indexed scatter-adds. (iii) The CSR SpMV for $A\mathbf{r}$ is the
inner loop's only sparse operation, and the space-filling-curve ordering keeps gathers nearly
banded. Total apply cost is $O(N(L+L_s))$ FLOPs. Supplementary~\ref{SX-supp:architecture}
contains the same equations with full shape progression notes.

\paragraph{Single-graph capture.}
Every kernel in the inner loop has fixed launch shape, no host-side allocation, and no
data-dependent control flow, so the entire iteration --- preconditioner apply, SpMV,
SAXPYs, dot-product reductions --- records into one CUDA Graph that subsequent iterations
replay with a single \verb|cudaGraphLaunch|. CPU dispatch between kernels disappears,
which in a real-time engine usually \emph{is} the floor on per-frame latency. The
graph-capturable property falls out of the design rather than being engineered:
IC/DILU triangular solves serialize through data-dependent wavefronts and AMG V-cycles
need level-by-level barriers, so neither is single-graph capturable in the same way.
Among learned alternatives, neural SPAI~\cite{Yang2025sparse} is graph-capturable but
applies $\mathbf{y}\!=\!G^\top(G\mathbf{r})$ as two SpMVs against a sparse $G$ whose
nonzero pattern follows $A$, so its apply remains bandwidth-bound on irregular gathers;
in contrast, our factorized approximate-inverse preconditioner apply reduces to dense
block GEMMs on the partition and is compute-bound on Tensor Cores.

\section{Training}
\label{sec:training}

\subsection{Cosine Hutchinson Probe Loss}
\label{sec:training:loss}

We train the network self-supervised. Given a batch of probe vectors
$\mathbf{Z}\in\mathbb{R}^{N\times K_z}$ drawn from the spectrally-balanced distribution
described below, we compute $A\mathbf{Z}$ via SpMV (treated as a fixed input, with no
gradient through $A$), apply the preconditioner to obtain $MA\mathbf{Z}$, and minimize the
angle between $MA\mathbf{Z}$ and $\mathbf{Z}$ as flattened tensors:
\begin{equation}
  \mathcal{L}_{\cos}
  \;=\; 1 \,-\, \frac{\langle \mathbf{Z},\,MA\mathbf{Z}\rangle_F}
        {\|\mathbf{Z}\|_F\,\|MA\mathbf{Z}\|_F},
  \label{eq:loss}
\end{equation}
where $\langle X,Y\rangle_F = \operatorname{tr}(X^\top Y)$ is the Frobenius inner product and
$\|\!\cdot\!\|_F$ the Frobenius norm. Equivalently, this is the cosine similarity between
$\operatorname{vec}(\mathbf{Z})$ and $\operatorname{vec}(MA\mathbf{Z})$ --- a single global
angle over all $N\!\cdot\!K_z$ entries rather than an average of $K_z$ per-probe cosines.
The single-denominator form ties all probes together through the same normalization, which
we found to give noticeably more stable gradients at small $K_z$ than the per-probe mean.
A perfect preconditioner gives $\mathcal{L}_{\cos}\!=\!0$; the worst case is
$\mathcal{L}_{\cos}\!=\!2$ (anti-aligned).

\paragraph{Why cosine: from vector distance to subspace distance.}
The conceptual upgrade $\mathcal{L}_{\cos}$ makes over Frobenius- and SAI-style objectives
is to replace a distance between \emph{vectors} with a distance between \emph{subspaces}.
Frobenius-type losses --- including the SAI loss
$\bigl\|\tfrac{1}{\|A\|}AM-I\bigr\|_F^2$ used by~\cite{Yang2025sparse} --- penalize the
pointwise Euclidean deviation of the specific vector $\tfrac{1}{\|A\|}AM\mathbf{z}$ from the
specific vector $\mathbf{z}$. That distance is sensitive to the magnitude of $AM\mathbf{z}$,
so it implicitly demands that $AM$ act as identity at a particular absolute scale --- in the
SAI case, at scale $\|A\|$. $\mathcal{L}_{\cos}$, by contrast, sees only the
\emph{direction} of $MA\mathbf{Z}$. Two preconditioners that send a probe to the same
one-dimensional subspace incur the same loss, no matter how they scale the vector inside
that subspace. Geometrically, $\mathcal{L}_{\cos}$ is a distance on the projective space
$\mathbb{P}(\mathbb{R}^{NK_z})$ rather than a Euclidean distance on $\mathbb{R}^{NK_z}$:

\begin{proposition}[Cosine Hutchinson loss is a subspace distance]
\label{prop:cos-subspace}
Let $M:\mathbb{R}^N\!\to\!\mathbb{R}^N$ be a linear preconditioner, $A$ a fixed SPD matrix,
and $\mathbf{Z}\in\mathbb{R}^{N\times K_z}$ a probe matrix with
$\mathbf{Z}\!\ne\!0$ and $MA\mathbf{Z}\!\ne\!0$. Write
$\widehat{\mathbf{Z}}\!=\!\operatorname{vec}(\mathbf{Z})$ and
$\widehat{\mathbf{Y}}\!=\!\operatorname{vec}(MA\mathbf{Z})$, both elements of
$\mathbb{R}^{NK_z}$.
\begin{enumerate}
\item \emph{(Positive-scale invariance.)} For every $\alpha\!>\!0$,
$\mathcal{L}_{\cos}(\alpha M)\!=\!\mathcal{L}_{\cos}(M)$. The loss therefore descends to a
well-defined function on the quotient of preconditioners modulo positive rescaling.
\item \emph{(Subspace interpretation.)} Let $\theta\in[0,\pi/2]$ be the principal angle
between the lines $\mathbb{R}\widehat{\mathbf{Z}}$ and $\mathbb{R}\widehat{\mathbf{Y}}$ in
$\mathbb{R}^{NK_z}$, and let
$\Pi_{\mathbb{R}\mathbf{v}}\!=\!\mathbf{v}\mathbf{v}^\top\!/\|\mathbf{v}\|_2^2$ denote the
orthogonal projector onto the line through $\mathbf{v}$. Then
\begin{equation}
  \mathcal{L}_{\cos}(M) \;=\; 1-\cos\theta,
  \qquad
  \tfrac{1}{2}\bigl\|\Pi_{\mathbb{R}\widehat{\mathbf{Z}}}
              -\Pi_{\mathbb{R}\widehat{\mathbf{Y}}}\bigr\|_F^2
  \;=\; 1-\cos^2\theta,
  \label{eq:cos-proj}
\end{equation}
so $\mathcal{L}_{\cos}$ depends on $\bigl(\widehat{\mathbf{Z}},\widehat{\mathbf{Y}}\bigr)$
only through the unsigned angle between the lines they span in $\mathbb{R}^{NK_z}$. Both
functionals vanish exactly when those two lines coincide.
\item \emph{(SAI loss is a vector distance.)} The SAI-style loss
$\mathcal{L}_{\mathrm{SAI}}(M)\!=\!\bigl\|\,\tfrac{1}{\|A\|}AM\mathbf{Z}-\mathbf{Z}\bigr\|_F^2$
is \emph{not} invariant under $M\!\mapsto\!\alpha M$ for $\alpha\!\ne\!1$, and is minimized
uniquely (over scalings of $M$) at the choice that places the vector
$\tfrac{1}{\|A\|}AM\mathbf{Z}$ as close as possible \emph{as a vector in $\mathbb{R}^{NK_z}$}
to the specific target $\mathbf{Z}$.
\end{enumerate}
\end{proposition}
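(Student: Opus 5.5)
The plan is to work throughout with the vectorized quantities $\widehat{\mathbf{Z}}$ and $\widehat{\mathbf{Y}}$. This turns each item into an elementary statement about two nonzero vectors in $\mathbb{R}^{NK_z}$. The one identity needed to translate back is $\langle X,Y\rangle_F=\operatorname{vec}(X)^\top\operatorname{vec}(Y)$, together with $\|X\|_F=\|\operatorname{vec}(X)\|_2$. With it, \eqref{eq:loss} becomes $\mathcal{L}_{\cos}(M)=1-\widehat{\mathbf{Z}}^\top\widehat{\mathbf{Y}}/(\|\widehat{\mathbf{Z}}\|_2\|\widehat{\mathbf{Y}}\|_2)$. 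The hypotheses $\mathbf{Z}\neq 0$ and $MA\mathbf{Z}\neq0$ are exactly what make this quotient well defined.

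For item 1, replacing $M$ by $\alpha M$ sends $\widehat{\mathbf{Y}}$ to $\alpha\widehat{\mathbf{Y}}$, by linearity of $M$ and of $\operatorname{vec}$. The numerator then scales by $\alpha$ and the denominator by $|\alpha|=\alpha$, so the loss is unchanged. Well-definedness on the quotient by $\mathbb{R}_{>0}$ follows immediately.

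For item 2, I would set $c=\widehat{\mathbf{Z}}^\top\widehat{\mathbf{Y}}/(\|\widehat{\mathbf{Z}}\|\|\widehat{\mathbf{Y}}\|)\in[-1,1]$ by Cauchy--Schwarz. The principal angle between the two lines satisfies $\cos\theta=|c|$.

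The projector identity is then a direct trace computation. Write $\Pi_{\mathbf{v}}=\mathbf{v}\mathbf{v}^\top/\|\mathbf{v}\|^2$, and use that projectors are symmetric and idempotent, so $\|\Pi_1-\Pi_2\|_F^2=\operatorname{tr}\Pi_1+\operatorname{tr}\Pi_2-2\operatorname{tr}(\Pi_1\Pi_2)$. Each rank-one projector has trace $1$, and $\operatorname{tr}(\Pi_{\widehat{\mathbf{Z}}}\Pi_{\widehat{\mathbf{Y}}})=c^2$. This gives $\tfrac12\|\cdot\|_F^2=1-c^2=1-\cos^2\theta$. It vanishes iff $|c|=1$, which by the equality case of Cauchy--Schwarz means the vectors are parallel, i.e.\ the lines coincide.

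The identity $\mathcal{L}_{\cos}=1-\cos\theta$ is the step I expect to be the real obstacle, because it is literally true only when $c\ge 0$. If $c<0$, then $\mathcal{L}_{\cos}=1+\cos\theta$. This is consistent with the paper's own remark that the anti-aligned value is $2$, whereas $1-\cos\theta\le 1$ for $\theta\in[0,\pi/2]$. My plan is to prove the identity under the explicit proviso $\langle\mathbf{Z},MA\mathbf{Z}\rangle_F\ge0$. Alternatively, one can read $\theta$ as the signed angle in $[0,\pi]$ between the rays; that keeps $\mathcal{L}_{\cos}$ a function of the pair of unit directions. The honest statement is that $\mathcal{L}_{\cos}$ determines $\cos\theta$ and the sign of $c$. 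On the relevant regime $c\ge0$, which every sensible preconditioner satisfies since $MA$ near a positive multiple of $I$ gives $c>0$, it is a monotone function of the unsigned line angle. Likewise, $\mathcal{L}_{\cos}=0$ exactly when $\widehat{\mathbf{Y}}$ is a positive multiple of $\widehat{\mathbf{Z}}$, and this is the version of "vanishes exactly when the lines coincide" I would state for the cosine loss.

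For item 3, I would expand $\mathcal{L}_{\mathrm{SAI}}(\alpha M)=\alpha^2 a-2\alpha b+\|\mathbf{Z}\|_F^2$. Here $a=\|AM\mathbf{Z}\|_F^2/\|A\|^2$ and $b=\langle AM\mathbf{Z},\mathbf{Z}\rangle_F/\|A\|$. We have $a>0$ because $A$ is invertible and $M\mathbf{Z}\neq0$; I would note that nonvanishing of $AM\mathbf{Z}$ does not follow formally from $MA\mathbf{Z}\ne0$, so I would add it as a hypothesis. The expression is a strictly convex quadratic in $\alpha$ with unique minimizer $\alpha^\star=b/a$. That minimizer is the orthogonal projection of $\operatorname{vec}(\mathbf{Z})$ onto the line spanned by $\operatorname{vec}(\tfrac1{\|A\|}AM\mathbf{Z})$, which is the "closest as a vector" claim. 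Non-invariance holds because a nonconstant quadratic takes each value at most twice. So $\mathcal{L}_{\mathrm{SAI}}(\alpha M)=\mathcal{L}_{\mathrm{SAI}}(M)$ only for $\alpha\in\{1,\,2b/a-1\}$. I would state non-invariance in this precise form: the loss is not constant along the scaling ray, and differs from its value at $\alpha=1$ for every $\alpha$ other than these two.
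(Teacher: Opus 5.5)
Your proposal is correct and follows the paper's sketch: the positive factor cancels between numerator and denominator, the projector identity is the same rank-one trace computation, and the SAI claim rests on convexity of the quadratic along the scaling ray. Your caveats are genuine, and the paper glosses over them:
\begin{itemize}
\item Its ``by definition of $\theta$'' step for $\mathcal{L}_{\cos}=1-\cos\theta$ silently requires $\langle\mathbf{Z},MA\mathbf{Z}\rangle_F\ge 0$. The paper's own remark that anti-alignment gives $\mathcal{L}_{\cos}=2$ contradicts $\theta\in[0,\pi/2]$. Hence $\mathcal{L}_{\cos}$ vanishes only for positively parallel vectors, not whenever the lines coincide.
\item Its ``strictly convex in $\beta M$'' argument for item 3 needs your added hypothesis $M\mathbf{Z}\neq 0$.
\item That argument gives non-invariance only in your sense of non-constancy along the ray, since $\alpha=2b/a-1$ reproduces the value at $\alpha=1$.
\end{itemize}
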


\begin{proof}[Sketch]
(1) Scaling $M$ by $\alpha\!>\!0$ scales the numerator
$\langle\mathbf{Z},MA\mathbf{Z}\rangle_F$ by $\alpha$ and the denominator
$\|\mathbf{Z}\|_F\,\|MA\mathbf{Z}\|_F$ by the same $\alpha$, leaving the cosine unchanged.
(2) The first equality is by definition of $\theta$; the second follows from the rank-one
projector identity
$\tfrac{1}{2}\|\Pi_{\mathbb{R}\mathbf{u}}-\Pi_{\mathbb{R}\mathbf{v}}\|_F^2
= 1 - \langle\mathbf{u},\mathbf{v}\rangle^2/(\|\mathbf{u}\|^2\|\mathbf{v}\|^2) = 1-\cos^2\theta$.
Both $1\!-\!\cos\theta$ and $1\!-\!\cos^2\theta$ are valid notions of squared chordal
distance on $\mathbb{P}(\mathbb{R}^{NK_z})$ near $\theta\!=\!0$; we use $1-\cos\theta$ in
\eqref{eq:loss} for better-conditioned gradients near the minimum.
(3) $\mathcal{L}_{\mathrm{SAI}}$ has the form $\|\beta M\mathbf{u}-\mathbf{u}\|^2$ for fixed
$\beta\!=\!1/\|A\|$, strictly convex in $\beta M$, hence not rescaling-invariant.
\end{proof}

\noindent
The proposition matches the PCG geometry one-to-one. PCG's convergence depends only on the
relative spread of the eigenvalues of $MA$, not on their absolute location
(Sec.~\ref{sec:background}); accordingly, the correct space to optimize $M$ over is the
projective space $M/\mathbb{R}_{>0}$, and the natural loss on that space is a distance
between the subspaces $\mathbb{R}\widehat{\mathbf{Z}}$ and $\mathbb{R}\widehat{\mathbf{Y}}$
they span, exactly what $\mathcal{L}_{\cos}$ provides. Frobenius- and SAI-style losses live
on the wrong space --- they pin down the absolute scale of $MA$ even though PCG does not
care --- and as a side effect implicitly demand that the eigenvalues of $MA$ cluster near a
chosen value ($\|A\|$ in the SAI case), wasting capacity on a constraint with no algorithmic
payoff. By dropping that constraint, $\mathcal{L}_{\cos}$ frees the network to cluster the
spectrum wherever the current preconditioner makes angular alignment easiest --- a behavior
we observe directly in \S\ref{sec:results:training}
(Figs.~\ref{fig:training:dynamics}--\ref{fig:loss:eigenvalues}) and ablate against SAI in
\S\ref{sec:results:training}.

\subsection{Spectrally Biased Probe Vectors}

An isotropic Gaussian probe places equal expected energy on every eigenmode of $MA$
(Sec.~\ref{sec:background}), so the resulting gradient signal is also white in the probe's
eigenbasis. The blocks of our preconditioner, however, are not all at the same spatial
scale: leaf-diagonal blocks resolve fine, high-frequency structure over $L$ adjacent nodes,
while an off-diagonal tile of size $SL$ resolves much lower-frequency, larger-scale
structure. A spectrally white probe distribution therefore distributes the gradient signal
unevenly across these block scales --- the high-frequency components, which the
fine-scale (diagonal) blocks are tuned to, dominate the signal; the coarse-scale
(off-diagonal, large-$S$) blocks receive proportionally weaker gradients. The downstream
effect is that blocks at different scales saturate at different times during training, with
the coarse blocks plateauing late and limiting overall convergence.

We rebalance the gradient signal across block scales by shifting probe energy toward lower
spatial frequencies. A small number of damped-Jacobi smoothing steps acts as a spectral
low-pass on the probe:
\begin{equation}
  \mathbf{z}^{(t+1)} = \mathbf{z}^{(t)} \,-\, \omega\,D^{-1} A\,\mathbf{z}^{(t)},\qquad
  D = \operatorname{diag}(A),
  \label{eq:jacobi_smooth}
\end{equation}
with $\omega\!=\!0.6$ and two steps in every reported run. The high-frequency components of
the probe are damped more strongly than the low-frequency components, redistributing probe
energy toward the eigenmodes the coarse blocks are responsible for. The result is more
even gradient magnitudes across block scales and substantially more synchronous training of
fine and coarse blocks. Probes are detached, so gradients do not flow back through the
smoothing.

\subsection{Training Setup}

We train with AdamW under a reduce-on-plateau schedule and global gradient clipping. Training
contexts (graph, $A$, masks, padded sizes, smoothed probes) are precomputed once and cached
on disk; at each step a mini-batch is drawn at random and padded to a common node count. We
set the number of probe vectors to $K_z = \max(64,\lceil\sqrt{N}\rceil)$, balancing gradient
noise against compute. The model is compiled with \verb|torch.compile|. Preconditioner
weights and the apply path use float32; only the PCG scalar accumulators (dot products,
residual norms, step sizes) are computed in float64, which we found is sufficient to
prevent residual drift on stiff systems without paying for full mixed-precision GEMMs.

\section{Experiments}
\label{sec:results}

\subsection{Setup}
\label{sec:results:setup}

All GPU experiments run on a single NVIDIA H200. Our model uses $d\!=\!128$,
$n_l\!=\!3$, $L\!=\!128$, $p_{\mathrm{off}}\!=\!4$, $n_{\mathrm{gcn}}\!=\!2$, highways
on, trained once for $\sim\!15$\,min per scale and reused for every test frame at that
scale. All reported solve times are to relative residual
$\|\mathbf{r}_k\|_2/\|\mathbf{r}_0\|_2\!\le\!10^{-8}$ --- two to three orders of
magnitude tighter than the $\sim\!10^{-3}$ tolerance typical of graphics-grade pressure
projection, chosen so the ranking reflects preconditioner quality, not early
termination. PCG timing uses single-graph CUDA Graph replay for every method that
admits it (unpreconditioned CG, Jacobi, AMGX SPAI, neural SPAI, ours) and per-kernel
launches for IC/AMG-class methods. AMGX runs with vendor defaults (we swept neighboring
configurations at $N\!=\!8\,192$ and saw no improvement). The neural SPAI baseline of
Yang \emph{et al.}~\cite{Yang2025sparse} is re-trained per scale on the same multiphase
distribution using their SAI loss and applied as a CUDA-Graph-captured pair of SpMVs.
Full hardware/software, precision, and dataset details are in Supplementary~\S\ref{SX-supp:experimental-setup}.

\paragraph{Benchmark.}
We instantiate the target regime --- \emph{stiff, every-frame-different SPD systems with
a hard real-time budget} --- as 2D multiphase pressure-Poisson: the 5-point Laplacian
$A_{ii}\!=\!\sum_j w_{ij}$, $A_{ij}\!=\!-w_{ij}$ with harmonic-mean face conductances
$w_{ij}\!=\!2\rho_i\rho_j/(\rho_i\!+\!\rho_j)$, on a per-cell density field $\rho$
randomized per frame across three axes: contrast
($\rho_{\text{heavy}}\!\sim\!\mathrm{loguniform}[5,100]$, so
$\kappa(A)\!\in\![10^3,10^5]$), barrier topology (1--3 rectangular barriers with
gap configurations including \emph{closed}, creating near-disconnected sub-domains that
force long-range coupling), and orientation. We pick 2D because it is a \emph{harder}
setting for local preconditioners (only 4 neighbors per node) and because it covers the
pressure-projection workload that dominates graphics-grade specialized simulators
(FLIP/PIC, MPM, fractional-step Navier--Stokes). The architecture is not specific to
structured grids --- it needs only a sparse graph and a loose spatial ordering --- and
extends directly to 3D (Fig.~\ref{fig:teaser}); we evaluate quantitatively in 2D for
fair, fully-tuned comparison against classical baselines. We do not target
$N\!\gg\!10^6$ regular voxel grids where structured multigrid
amortizes~\cite{Lyu2026multigrid}, nor batched offline PDE workloads. Per-frame
randomization and discretization details are in
Supplementary~\S\ref{SX-supp:dataset}; representative frame in
Supplementary, Fig.~\ref{SX-supp-fig:bench-frame}.

\subsection{Main Performance}
\label{sec:results:perf}

\begin{table}[t]
\centering
\caption{Per-frame end-to-end PCG solve time (ms) and PCG iteration count (in
parentheses) on the multiphase Poisson benchmark, $20$ frames per scale, relative
residual tolerance $10^{-8}$. GPU methods that admit single-graph capture
(unpreconditioned CG, Jacobi, AMGX SPAI, neural SPAI, ours) run the inner loop as a
single CUDA Graph replay (\S\ref{sec:application}); IC-class GPU methods and AMGX
V-cycles launch per kernel. Our Jacobi/CG numbers already include the CUDA-Graph
speedup --- an upper bound on what graph capture alone buys without our preconditioner.
CPU rows are reference only. \textbf{Bold} marks the fastest GPU time at each scale.}
\label{tab:full_perf_table_main}
\scriptsize\setlength{\tabcolsep}{2.4pt}
\begin{tabular}{@{}l ccccc@{}}
\toprule
Method & $1\,024$ & $2\,048$ & $4\,096$ & $8\,192$ & $16\,384$ \\
\midrule
Unprec.~CG (GPU)              & 18.5 (497)   & 24.7 (650)   & 43.6 (1\,153)  & 77.6 (1\,765)  & 89.2 (2\,103) \\
Jacobi (GPU)                  & 12.7 (325)   & 16.1 (429)   & 31.7 (839)     & 39.5 (968)     & 65.7 (1\,543) \\
AMGX SPAI (GPU)               & 53.2 (1)     & 82.0 (1)     & 76.0 (1)       & 134.5 (2)      & 188.3 (3) \\
IC / DILU (GPU)               & 139.5 (11)   & 208.4 (15)   & 312.8 (22)     & 503.0 (30)     & 579.7 (40) \\
Neural SPAI (GPU)~\cite{Yang2025sparse}
                              & 18.4 (118)   & 26.0 (167)   & 38.3 (246)     & 48.1 (338)     & 70.9 (496) \\
\textbf{Ours (GPU)}           & \textbf{7.0 (47)} & \textbf{8.8 (66)} & \textbf{9.2 (80)} & \textbf{17.9 (168)} & \textbf{47.6 (394)} \\
\midrule
\multicolumn{6}{@{}l}{\textit{CPU reference (not in competition with GPU rows)}}\\
IC (CPU)                      & 128.9 (59)   & 51.6 (96)    & 159.9 (113)    & 1\,478.9 (170) & 13\,968.4 (254) \\
AMG (CPU)                     & 22.2 (5)     & 24.8 (9)     & 24.5 (5)       & 109.8 (7)      & 674.2 (10) \\
Neural SPAI (CPU)~\cite{Yang2025sparse}
                              & 6.7 (118)    & 10.9 (167)   & 25.6 (246)     & 62.3 (338)     & 202.4 (496) \\
\bottomrule
\end{tabular}
\end{table}

\paragraph{Main result.}
Table~\ref{tab:full_perf_table_main} (also plotted on the figure pages as
Fig.~\ref{fig:scale_methods}; per-method runtime breakdown in Supplementary,
Table~\ref{SX-supp-tab:neural_time_components}) reports per-frame mean solve time and
iteration counts. Our method runs at interactive framerates across the full size range:
$17.9$\,ms ($\sim\!56$\,fps, $168$ iters) at $N\!=\!8\,192$ and $47.6$\,ms
($\sim\!21$\,fps, $394$ iters) at $N\!=\!16\,384$. The closest GPU baseline that
converges in the same regime is Jacobi, at $39.5$/$65.7$\,ms ($968$/$1543$ iters)
respectively --- $\sim\!6\times$ fewer iterations at $N\!=\!8\,192$ ($968$ vs.\ $168$) and
$\sim\!4\times$ at $N\!=\!16\,384$ ($1543$ vs.\ $394$) on identical hardware, and a
$1.4$--$2.2\!\times$ wall-clock gap. Neural
SPAI~\cite{Yang2025sparse}, re-trained per scale with its CUDA apply path, lands at
$48.1$/$70.9$\,ms ($338$/$496$ iters), trailing our method by $2.7\times$ at
$N\!=\!8\,192$ and $1.5\times$ at $N\!=\!16\,384$ despite an iteration count within
$\sim\!2\times$ of ours --- the gap is dominated by the two random-gather SpMVs its
apply dispatches per iteration, in contrast to our single batched-GEMM apply on
contiguous tensors (\S\ref{sec:application}). The IC- and AMG-class baselines reach
very low iteration counts ($1$--$40$) but pay for it in sequential triangular solves
or V-cycle synchronization, falling below $2$--$5$\,fps at $N\!=\!16\,384$ ---
confirming the architectural argument of Table~\ref{tab:apply_complexity}. The measured
curve is not perfectly linear in $N$ because kernels cross warp/tile-quantization and
cache-transition thresholds as the working set grows, even though the algorithmic order
remains $\Theta(N)$.

Fig.~\ref{fig:convergence} traces how convergence actually looks on a challenging frame
--- a closed cross-shaped barrier with stiff density contrast chosen to maximize
long-range coupling --- across unpreconditioned CG, Jacobi, IC, AMG, and ours, with the
right-hand side supported only on thin density interfaces. By $k\!=\!1$, Jacobi and IC
have damped the residual only locally around $\mathrm{supp}(b)$, while AMG and ours have
already attenuated it across the whole domain --- the visual signature of multiscale
transport that the highway buffers implement (\S\ref{sec:method:highways}). AMG matches
the spread but pays in V-cycle synchronization per iteration; ours runs the inner loop
as a single CUDA-Graph capture. Iteration counts on this frame
($288/1097/334/19/384$ for unprec/Jacobi/IC/AMG/ours) span nearly two orders of
magnitude, but per-iteration cost reverses the ranking for AMG and IC. Standard
deviations across the 20 test frames track condition number rather than $N$: Jacobi and
unprec.\ CG show $\sigma/\mu\!\approx\!50$--$100\%$ at $N\!\ge\!8\,192$ across the
$[5,100]$ contrast range, while ours stays at $\le\!21\%$ even at $N\!=\!16\,384$.

\subsection{Training dynamics and ablations}
\label{sec:results:training}

\paragraph{Training dynamics and loss.}
Fig.~\ref{fig:training:dynamics} tracks $\mathcal{L}_{\cos}$, the SAI loss on the same
checkpoints, and PCG iteration count across training. The three curves move together
until step $\sim\!8\,000$, after which the SAI loss \emph{rises} from $\sim\!10^{-3}$
to $\sim\!0.5$ while $\mathcal{L}_{\cos}$ keeps falling \emph{in lockstep} with PCG
iterations --- direct evidence of Prop.~\ref{prop:cos-subspace}: the model is moving
the eigenvalues of $MA$ away from $\|A\|$ in pointwise terms while tightening the
relative cluster wherever angular alignment is easiest. Fig.~\ref{fig:loss:eigenvalues}
confirms this on the spectrum at $N\!=\!1\,024$: SAI delivers a $16\times$ $\kappa$
reduction with the cluster anchored near $\|A\|$, the same architecture trained with
$\mathcal{L}_{\cos}$ delivers $68\times$ with the cluster wherever it pleased
--- a $4.3\times$ gap attributable to the loss alone. Total wall-clock training is
$16.2$ min on a single H200 ($24\,300$ steps); the model overtakes GPU Jacobi by
step $\sim\!2\,000$ ($\sim\!1.3$ min) and drops below $200$ PCG iterations by
step $\sim\!12\,000$ (Fig.~\ref{fig:probe-alignment} corroborates
the link from probe-space alignment to spectral clustering).

\paragraph{Architecture ablations.}
Width $d$, depth $n_l$, and highway connections each move total solve time
non-trivially. Default ($d\!=\!128$, $n_l\!=\!3$, hw on, $\sim\!2.6$M parameters):
(i) shrinking depth $n_l\!=\!3\!\to\!1$ cuts inference $2.6\times$ but doubles total
solve time because iterations rise $2.5\times$ (additional layers are needed to
compose information routed through the highway tokens); (ii) removing highways raises
PCG iterations $2.3\times$ at $N\!=\!2\,048$ and the penalty grows with $N$; (iii)
$d\!=\!64$ is $28\%$ slower overall, $d\!=\!256$ is competitive at small $N$ but
raises mean PCG iterations from $105$ to $191$ when averaged over
$N\!\in\!\{2048,4096,8192\}$. Table~\ref{tab:ablation-core} summarizes
the core sweep results; full rows remain in Supplementary, Table~\ref{SX-supp-tab:ablation}.

\begin{table}[t]
\centering
\caption{Core architecture ablations (subset of Supplementary
Table~\ref{SX-supp-tab:ablation}).}
\label{tab:ablation-core}
\scriptsize\setlength{\tabcolsep}{3pt}
\begin{tabular}{@{}l l rr r@{}}
\toprule
Group & Configuration & Infer. (ms) & Iters & Total (ms) \\
\midrule
Width (avg.\ $N\!\in\!\{2048,4096,8192\}$)
  & $d{=}64,\;n_l{=}3$, hw   & 3.0 & 147 & 15.4 \\
Width (avg.\ $N\!\in\!\{2048,4096,8192\}$)
  & $d{=}128,\;n_l{=}3$, hw  & 3.1 & 105 & 12.0 \\
Width (avg.\ $N\!\in\!\{2048,4096,8192\}$)
  & $d{=}256,\;n_l{=}3$, hw  & 3.2 & 191 & 19.3 \\
\midrule
Depth ($N\!=\!8\,192$)
  & $d{=}128,\;n_l{=}1$, hw  & 1.3 & 421 & 36.9 \\
Depth ($N\!=\!8\,192$)
  & $d{=}128,\;n_l{=}3$, hw  & 3.4 & 168 & 17.9 \\
\midrule
Highways ($N\!=\!2\,048$)
  & $d{=}128,\;n_l{=}3$, hw      & 3.3 & 66  & 8.8 \\
Highways ($N\!=\!2\,048$)
  & $d{=}128,\;n_l{=}3$, no-hw   & 2.2 & 149 & 13.8 \\
\bottomrule
\end{tabular}
\end{table}

\subsection{Generalization and what the loss buys}
\label{sec:results:generalization}
We probe within-family deployment robustness at $N\!=\!4\,096$ across three
shifts --- \emph{topology} (closed barriers withheld from training),
\emph{contrast} (train on $\rho_{\text{heavy}}\!\in\![5,25]$, evaluate on
$(25,100]$), and their composition --- and compare three systems on identical
eval sets: ours, the \emph{same architecture trained instead with the SAI loss}
of~\cite{Yang2025sparse}, and neural SPAI~\cite{Yang2025sparse} trained/evaluated
under the same split protocol (Table~\ref{tab:generalization-vs-baselines}). The
same-architecture row is a clean loss ablation; the neural SPAI row is a matched-split
learned baseline rather than a full-train upper bound.

\begin{table}[t]
\centering
\caption{Generalization at $N\!=\!4\,096$ across the four eval distributions, in
PCG iterations and speedup vs.\ Jacobi (mean over $20$ frames,
$\mathrm{rtol}\!=\!10^{-8}$). ``Jacobi ms'' is the baseline denominator used to
compute speedup. Our model and the SAI-loss ablation are trained on the
\emph{complement} of each eval cell (held-out OOD setup); neural
SPAI~\cite{Yang2025sparse} is trained with the same split protocol
(train on the corresponding restricted distribution for each eval row).}
\label{tab:generalization-vs-baselines}
\scriptsize\setlength{\tabcolsep}{3pt}
\begin{tabular}{@{}l r rr rr rr@{}}
\toprule
& Jacobi
& \multicolumn{2}{c}{\shortstack[c]{\textbf{Ours}\\(cosine loss)}}
& \multicolumn{2}{c}{\shortstack[c]{Ours arch + SAI loss\\\cite{Yang2025sparse}}}
& \multicolumn{2}{c}{\shortstack[c]{Neural SPAI\\\cite{Yang2025sparse}}} \\
\cmidrule(lr){2-2}\cmidrule(lr){3-4}\cmidrule(lr){5-6}\cmidrule(lr){7-8}
Eval distribution         & ms & iters & speedup & iters & speedup & iters & speedup \\
\midrule
Full / in-distribution    & 29.4 & 82  & \textbf{$3.4\times$} & 405 & $0.9\times$ & 236 & $0.9\times$ \\
Closed barriers only      & 24.7 & 68  & \textbf{$3.3\times$} & 208 & $1.3\times$ & 258 & $0.7\times$ \\
High contrast             & 24.4 & 142 & \textbf{$1.9\times$} & 175 & $1.5\times$ & 264 & $0.7\times$ \\
Closed $+$ high contrast  & 25.1 & 147 & \textbf{$1.9\times$} & 174 & $1.6\times$ & 396 & $0.5\times$ \\
\bottomrule
\end{tabular}
\end{table}

Three observations carry the section. \emph{(i) Topology generalization is
essentially free}: withholding closed barriers from training leaves iteration
counts unchanged ($68$ vs.\ $82$), because the $\mathcal{H}$-matrix partition is
keyed to spatial indexing, not barrier geometry. \emph{(ii) The loss, not the
architecture or the data, is what unlocks the quality.} Replacing only the loss
--- same network, same training distribution --- raises iteration counts
$\sim\!5\times$ at the in-distribution eval cell ($82\!\to\!405$); the SAI
gradient pins eigenvalues near $\|A\|$, wasting capacity on a constraint PCG
does not care about (Prop.~\ref{prop:cos-subspace}). Under matched-split training,
neural SPAI sits at $236$--$264$ iterations on the first three rows and degrades to
$396$ on the compositional row. Relative to Jacobi, our row stays at
$1.9$--$3.4\times$ speedup across all eval cells, while neural SPAI is
$0.9\times$, $0.7\times$, $0.7\times$, and $0.5\times$ (slower than Jacobi in
three of four rows, and substantially slower in the compositional case); the
same-architecture SAI ablation reaches only $0.9$--$1.6\times$. \emph{(iii) Contrast is the dominant
remaining OOD axis for our method.} Pure amplitude growth is absorbed for free
by $\mathcal{L}_{\cos}$'s scale invariance, but far-field interactions at high
contrast push the spectrum past the cluster the network has seen --- iteration
count roughly doubles ($82\!\to\!142$--$147$) and $\sigma/\mu$ grows fivefold.
Full robustness grid in Supplementary, Table~\ref{SX-supp-tab:generalization}.

\section{Discussion and Future Work}
\label{sec:limitations}

The recipe --- a weak-admissibility $\mathcal{H}$-matrix prior, a scale-invariant
cosine-Hutchinson objective, and a single-graph apply path --- is most useful where it
targets: stiff, every-frame-different SPD systems with a hard real-time budget. The
architecture itself depends on nothing fluid-specific, only a loose spatial ordering of
degrees of freedom, and extends directly to 3D (Fig.~\ref{fig:teaser}). We expect the
largest gains to persist on other SPD families with (i) geometric locality, (ii)
frame-to-frame coefficient changes, and (iii) hard real-time budgets (implicit
viscosity/diffusion, soft-body and contact dynamics) and smaller gains where a single
matrix is reused long enough for heavy classical setup to amortize. Two known
limitations are worth flagging: (a) training pre-sizes the partition to a maximum $N$,
so pushing past it currently requires retraining --- a dynamic-partition variant
(constant leaf count at $O(N\log N)$ rather than constant leaf size at $\Theta(N)$,
\S\ref{sec:method:stacks}) removes this ceiling at the cost of one extra pooling pass
per layer; (b) the $\mathcal{H}$-matrix prior assumes some spatial locality of $A$
under its indexing, and degrades on operators without a natural spatial coordinate
(power-grid Laplacians, social-network matrices) or where the far-field rank does not
decay with separation.

\bibliographystyle{ACM-Reference-Format}
\bibliography{references}


\begin{thebibliography}{39}


\ifx \showCODEN    \undefined \def \showCODEN     #1{\unskip}     \fi
\ifx \showISBNx    \undefined \def \showISBNx     #1{\unskip}     \fi
\ifx \showISBNxiii \undefined \def \showISBNxiii  #1{\unskip}     \fi
\ifx \showISSN     \undefined \def \showISSN      #1{\unskip}     \fi
\ifx \showLCCN     \undefined \def \showLCCN      #1{\unskip}     \fi
\ifx \shownote     \undefined \def \shownote      #1{#1}          \fi
\ifx \showarticletitle \undefined \def \showarticletitle #1{#1}   \fi
\ifx \showURL      \undefined \def \showURL       {\relax}        \fi
\providecommand\bibfield[2]{#2}
\providecommand\bibinfo[2]{#2}
\providecommand\natexlab[1]{#1}
\providecommand\showeprint[2][]{arXiv:#2}

\bibitem[Benzi et~al\mbox{.}(1996)]%
        {benzi1996sparse}
\bibfield{author}{\bibinfo{person}{Michele Benzi}, \bibinfo{person}{Carl~D. Meyer}, {and} \bibinfo{person}{Miroslav T{\u{u}}ma}.} \bibinfo{year}{1996}\natexlab{}.
\newblock \showarticletitle{A Sparse Approximate Inverse Preconditioner for the Conjugate Gradient Method}.
\newblock \bibinfo{journal}{\emph{SIAM Journal on Scientific Computing}} \bibinfo{volume}{17}, \bibinfo{number}{5} (\bibinfo{year}{1996}), \bibinfo{pages}{1135--1149}.
\newblock


\bibitem[B{\"o}rm et~al\mbox{.}(2003)]%
        {borm2003introduction}
\bibfield{author}{\bibinfo{person}{Steffen B{\"o}rm}, \bibinfo{person}{Lars Grasedyck}, {and} \bibinfo{person}{Wolfgang Hackbusch}.} \bibinfo{year}{2003}\natexlab{}.
\newblock \bibinfo{booktitle}{\emph{Introduction to Hierarchical Matrices with Applications}}.
\newblock \bibinfo{publisher}{Engineering Analysis with Boundary Elements}.
\newblock


\bibitem[Chen(2024)]%
        {chen2024gnp}
\bibfield{author}{\bibinfo{person}{Jie Chen}.} \bibinfo{year}{2024}\natexlab{}.
\newblock \showarticletitle{Graph Neural Preconditioners for Iterative Solutions of Sparse Linear Systems}.
\newblock \bibinfo{journal}{\emph{arXiv preprint arXiv:2406.00809}} (\bibinfo{year}{2024}).
\newblock
\urldef\tempurl%
\url{https://arxiv.org/abs/2406.00809}
\showURL{%
\tempurl}


\bibitem[Fan et~al\mbox{.}(2019)]%
        {Fan2019multiscale}
\bibfield{author}{\bibinfo{person}{Yuwei Fan}, \bibinfo{person}{Lin Lin}, \bibinfo{person}{Lexing Ying}, {and} \bibinfo{person}{Leonardo Zepeda-N{\'u}{\~n}ez}.} \bibinfo{year}{2019}\natexlab{}.
\newblock \showarticletitle{A Multiscale Neural Network Based on Hierarchical Matrices}.
\newblock \bibinfo{journal}{\emph{arXiv preprint arXiv:1807.01883}} (\bibinfo{year}{2019}).
\newblock
\urldef\tempurl%
\url{https://arxiv.org/abs/1807.01883}
\showURL{%
\tempurl}


\bibitem[Fognini et~al\mbox{.}(2025)]%
        {Fognini2025neural}
\bibfield{author}{\bibinfo{person}{Emilio~McAllister Fognini}, \bibinfo{person}{Marta~M. Betcke}, {and} \bibinfo{person}{Ben~T. Cox}.} \bibinfo{year}{2025}\natexlab{}.
\newblock \showarticletitle{Learning {Green's} Operators through Hierarchical Neural Networks Inspired by the Fast Multipole Method}.
\newblock \bibinfo{journal}{\emph{arXiv preprint arXiv:2509.20591}} (\bibinfo{year}{2025}).
\newblock
\urldef\tempurl%
\url{https://arxiv.org/abs/2509.20591}
\showURL{%
\tempurl}


\bibitem[Greengard and Rokhlin(1987)]%
        {greengard1987fast}
\bibfield{author}{\bibinfo{person}{Leslie Greengard} {and} \bibinfo{person}{Vladimir Rokhlin}.} \bibinfo{year}{1987}\natexlab{}.
\newblock \showarticletitle{A Fast Algorithm for Particle Simulations}.
\newblock \bibinfo{journal}{\emph{J. Comput. Phys.}} \bibinfo{volume}{73}, \bibinfo{number}{2} (\bibinfo{year}{1987}), \bibinfo{pages}{325--348}.
\newblock
\href{https://doi.org/10.1016/0021-9991(87)90140-9}{doi:\nolinkurl{10.1016/0021-9991(87)90140-9}}


\bibitem[Hackbusch(1999)]%
        {hackbusch1999sparse}
\bibfield{author}{\bibinfo{person}{Wolfgang Hackbusch}.} \bibinfo{year}{1999}\natexlab{}.
\newblock \bibinfo{booktitle}{\emph{A Sparse Matrix Arithmetic Based on $\mathcal{H}$-Matrices. {Part I}: Introduction to $\mathcal{H}$-Matrices}}.
\newblock \bibinfo{publisher}{Springer}, \bibinfo{address}{Berlin}.
\newblock


\bibitem[Hackbusch and Khoromskij(2002)]%
        {hackbusch2002adaptive}
\bibfield{author}{\bibinfo{person}{Wolfgang Hackbusch} {and} \bibinfo{person}{Boris~N. Khoromskij}.} \bibinfo{year}{2002}\natexlab{}.
\newblock \bibinfo{booktitle}{\emph{Adaptive $\mathcal{H}$-Matrix Approximation on General Domains}}.
\newblock \bibinfo{publisher}{Springer}.
\newblock


\bibitem[Hartland et~al\mbox{.}(2023)]%
        {Hartland2023hodlr}
\bibfield{author}{\bibinfo{person}{Tucker Hartland}, \bibinfo{person}{Georg Stadler}, \bibinfo{person}{Mauro Perego}, \bibinfo{person}{Kim Liegeois}, {and} \bibinfo{person}{No{\'e}mi Petra}.} \bibinfo{year}{2023}\natexlab{}.
\newblock \showarticletitle{Hierarchical Off-Diagonal Low-Rank Approximation of Hessians in Inverse Problems, with Application to Ice Sheet Model Initialization}.
\newblock \bibinfo{journal}{\emph{arXiv preprint arXiv:2301.03644}} (\bibinfo{year}{2023}).
\newblock
\urldef\tempurl%
\url{https://arxiv.org/abs/2301.03644}
\showURL{%
\tempurl}


\bibitem[H{\"a}usner et~al\mbox{.}(2023)]%
        {hausner2023neural}
\bibfield{author}{\bibinfo{person}{Paul H{\"a}usner}, \bibinfo{person}{Ozan {\"O}ktem}, {and} \bibinfo{person}{Jens Sj{\"o}lund}.} \bibinfo{year}{2023}\natexlab{}.
\newblock \showarticletitle{Neural Incomplete Factorization: Learning Preconditioners for the Conjugate Gradient Method}.
\newblock \bibinfo{journal}{\emph{arXiv preprint arXiv:2305.16368}} (\bibinfo{year}{2023}).
\newblock
\urldef\tempurl%
\url{https://arxiv.org/abs/2305.16368}
\showURL{%
\tempurl}


\bibitem[Hutchinson(1989)]%
        {hutchinson1989stochastic}
\bibfield{author}{\bibinfo{person}{Michael~F. Hutchinson}.} \bibinfo{year}{1989}\natexlab{}.
\newblock \showarticletitle{A Stochastic Estimator of the Trace of the Influence Matrix for {Laplacian} Smoothing Splines}.
\newblock \bibinfo{journal}{\emph{Communications in Statistics---Simulation and Computation}} \bibinfo{volume}{18}, \bibinfo{number}{3} (\bibinfo{year}{1989}), \bibinfo{pages}{1059--1076}.
\newblock


\bibitem[Ihmsen et~al\mbox{.}(2011)]%
        {ihmsen2011parallel}
\bibfield{author}{\bibinfo{person}{Markus Ihmsen}, \bibinfo{person}{Nadir Akinci}, \bibinfo{person}{Markus Becker}, {and} \bibinfo{person}{Matthias Teschner}.} \bibinfo{year}{2011}\natexlab{}.
\newblock \showarticletitle{A Parallel {SPH} Implementation on Multi-Core {CPUs}}.
\newblock \bibinfo{journal}{\emph{Computer Graphics Forum}} \bibinfo{volume}{30}, \bibinfo{number}{1} (\bibinfo{year}{2011}), \bibinfo{pages}{99--112}.
\newblock


\bibitem[Karras(2012)]%
        {karras2012maximizing}
\bibfield{author}{\bibinfo{person}{Tero Karras}.} \bibinfo{year}{2012}\natexlab{}.
\newblock \showarticletitle{Maximizing Parallelism in the Construction of {BVHs}, Octrees, and {$k$}-d Trees}. In \bibinfo{booktitle}{\emph{Proc. ACM SIGGRAPH/Eurographics Conf. on High-Performance Graphics (HPG)}}. \bibinfo{pages}{33--37}.
\newblock


\bibitem[Kipf and Welling(2017)]%
        {kipf2017semi}
\bibfield{author}{\bibinfo{person}{Thomas~N. Kipf} {and} \bibinfo{person}{Max Welling}.} \bibinfo{year}{2017}\natexlab{}.
\newblock \showarticletitle{Semi-Supervised Classification with Graph Convolutional Networks}. In \bibinfo{booktitle}{\emph{International Conference on Learning Representations (ICLR)}}.
\newblock


\bibitem[Kolotilina and Yeremin(1993)]%
        {kolotilina1993factorized}
\bibfield{author}{\bibinfo{person}{L.~Yu. Kolotilina} {and} \bibinfo{person}{A.~Yu. Yeremin}.} \bibinfo{year}{1993}\natexlab{}.
\newblock \showarticletitle{Factorized Sparse Approximate Inverse Preconditionings {I}: Theory}.
\newblock \bibinfo{journal}{\emph{SIAM J. Matrix Anal. Appl.}} \bibinfo{volume}{14}, \bibinfo{number}{1} (\bibinfo{year}{1993}), \bibinfo{pages}{45--58}.
\newblock


\bibitem[Li et~al\mbox{.}(2023)]%
        {Li2023learning}
\bibfield{author}{\bibinfo{person}{Yichen Li}, \bibinfo{person}{Peter~Yichen Chen}, \bibinfo{person}{Tao Du}, {and} \bibinfo{person}{Wojciech Matusik}.} \bibinfo{year}{2023}\natexlab{}.
\newblock \showarticletitle{Learning Preconditioners for Conjugate Gradient {PDE} Solvers}.
\newblock \bibinfo{journal}{\emph{arXiv preprint arXiv:2305.16432}} (\bibinfo{year}{2023}).
\newblock
\urldef\tempurl%
\url{https://arxiv.org/abs/2305.16432}
\showURL{%
\tempurl}


\bibitem[Li et~al\mbox{.}(2020a)]%
        {li2020multipole}
\bibfield{author}{\bibinfo{person}{Zongyi Li}, \bibinfo{person}{Nikola Kovachki}, \bibinfo{person}{Kamyar Azizzadenesheli}, \bibinfo{person}{Burigede Liu}, \bibinfo{person}{Kaushik Bhattacharya}, \bibinfo{person}{Andrew Stuart}, {and} \bibinfo{person}{Anima Anandkumar}.} \bibinfo{year}{2020}\natexlab{a}.
\newblock \showarticletitle{Multipole Graph Neural Operator for Parametric Partial Differential Equations}.
\newblock \bibinfo{journal}{\emph{Advances in Neural Information Processing Systems}}  \bibinfo{volume}{33} (\bibinfo{year}{2020}).
\newblock


\bibitem[Li et~al\mbox{.}(2020b)]%
        {li2020neural}
\bibfield{author}{\bibinfo{person}{Zongyi Li}, \bibinfo{person}{Nikola Kovachki}, \bibinfo{person}{Kamyar Azizzadenesheli}, \bibinfo{person}{Burigede Liu}, \bibinfo{person}{Kaushik Bhattacharya}, \bibinfo{person}{Andrew Stuart}, {and} \bibinfo{person}{Anima Anandkumar}.} \bibinfo{year}{2020}\natexlab{b}.
\newblock \showarticletitle{Neural Operator: Graph Kernel Network for Partial Differential Equations}.
\newblock \bibinfo{journal}{\emph{arXiv preprint arXiv:2003.03485}} (\bibinfo{year}{2020}).
\newblock
\urldef\tempurl%
\url{https://arxiv.org/abs/2003.03485}
\showURL{%
\tempurl}


\bibitem[Li et~al\mbox{.}(2021)]%
        {li2021fourier}
\bibfield{author}{\bibinfo{person}{Zongyi Li}, \bibinfo{person}{Nikola Kovachki}, \bibinfo{person}{Kamyar Azizzadenesheli}, \bibinfo{person}{Burigede Liu}, \bibinfo{person}{Kaushik Bhattacharya}, \bibinfo{person}{Andrew Stuart}, {and} \bibinfo{person}{Anima Anandkumar}.} \bibinfo{year}{2021}\natexlab{}.
\newblock \showarticletitle{{Fourier} Neural Operator for Parametric Partial Differential Equations}. In \bibinfo{booktitle}{\emph{International Conference on Learning Representations}}.
\newblock
\urldef\tempurl%
\url{https://openreview.net/forum?id=c8P9NQVtmnO}
\showURL{%
\tempurl}


\bibitem[Lin and Mor{\'e}(1999)]%
        {lin1999incomplete}
\bibfield{author}{\bibinfo{person}{Chih-Jen Lin} {and} \bibinfo{person}{Jorge~J. Mor{\'e}}.} \bibinfo{year}{1999}\natexlab{}.
\newblock \showarticletitle{Incomplete {Cholesky} Factorizations with Limited Memory}.
\newblock \bibinfo{journal}{\emph{SIAM Journal on Scientific Computing}} \bibinfo{volume}{21}, \bibinfo{number}{1} (\bibinfo{year}{1999}), \bibinfo{pages}{24--45}.
\newblock


\bibitem[Liu et~al\mbox{.}(2016)]%
        {liu2016synchronization}
\bibfield{author}{\bibinfo{person}{Lijun Liu}, \bibinfo{person}{Shengguo Li}, \bibinfo{person}{Xiangke Hu}, \bibinfo{person}{Yutong Wang}, \bibinfo{person}{Xuejun Liu}, {and} \bibinfo{person}{Jingling Xue}.} \bibinfo{year}{2016}\natexlab{}.
\newblock \showarticletitle{Exploring {Data} {Level} {Parallelism} in {Incomplete} {LU} Factorization on {GPUs}}.
\newblock \bibinfo{journal}{\emph{IEEE Transactions on Parallel and Distributed Systems}} \bibinfo{volume}{27}, \bibinfo{number}{12} (\bibinfo{year}{2016}), \bibinfo{pages}{3397--3410}.
\newblock


\bibitem[Liu et~al\mbox{.}(2021)]%
        {Liu2021swin}
\bibfield{author}{\bibinfo{person}{Ze Liu}, \bibinfo{person}{Yutong Lin}, \bibinfo{person}{Yue Cao}, \bibinfo{person}{Han Hu}, \bibinfo{person}{Yixuan Wei}, \bibinfo{person}{Zheng Zhang}, \bibinfo{person}{Stephen Lin}, {and} \bibinfo{person}{Baining Guo}.} \bibinfo{year}{2021}\natexlab{}.
\newblock \showarticletitle{{Swin} Transformer: Hierarchical Vision Transformer using Shifted Windows}.
\newblock \bibinfo{journal}{\emph{arXiv preprint arXiv:2103.14030}} (\bibinfo{year}{2021}).
\newblock
\urldef\tempurl%
\url{https://arxiv.org/abs/2103.14030}
\showURL{%
\tempurl}


\bibitem[Lu et~al\mbox{.}(2021)]%
        {lu2021learning}
\bibfield{author}{\bibinfo{person}{Lu Lu}, \bibinfo{person}{Pengzhan Jin}, \bibinfo{person}{Guofei Pang}, \bibinfo{person}{Zhongqiang Zhang}, {and} \bibinfo{person}{George~Em Karniadakis}.} \bibinfo{year}{2021}\natexlab{}.
\newblock \showarticletitle{Learning Nonlinear Operators via {DeepONet} Based on the Universal Approximation Theorem of Operators}.
\newblock \bibinfo{journal}{\emph{Nature Machine Intelligence}} \bibinfo{volume}{3}, \bibinfo{number}{3} (\bibinfo{year}{2021}), \bibinfo{pages}{218--229}.
\newblock


\bibitem[Luz et~al\mbox{.}(2020)]%
        {Luz2020learning}
\bibfield{author}{\bibinfo{person}{Ilay Luz}, \bibinfo{person}{Meirav Galun}, \bibinfo{person}{Haggai Maron}, \bibinfo{person}{Ronen Basri}, {and} \bibinfo{person}{Irad Yavneh}.} \bibinfo{year}{2020}\natexlab{}.
\newblock \showarticletitle{Learning Algebraic Multigrid Using Graph Neural Networks}.
\newblock \bibinfo{journal}{\emph{arXiv preprint arXiv:2003.05744}} (\bibinfo{year}{2020}).
\newblock
\urldef\tempurl%
\url{https://arxiv.org/abs/2003.05744}
\showURL{%
\tempurl}


\bibitem[Lyu et~al\mbox{.}(2026)]%
        {Lyu2026multigrid}
\bibfield{author}{\bibinfo{person}{Kangbo Lyu}, \bibinfo{person}{Ruihong Cen}, \bibinfo{person}{Yushen Wu}, {and} \bibinfo{person}{Tao Du}.} \bibinfo{year}{2026}\natexlab{}.
\newblock \bibinfo{title}{A Multigrid-Inspired Neural Iterative Solver for Poisson Equations on Large Voxel Grids}.
\newblock \bibinfo{howpublished}{\url{https://openreview.net/forum?id=lNcbGSWhJo}}.
\newblock


\bibitem[Naumov et~al\mbox{.}(2015)]%
        {naumov2015amgx}
\bibfield{author}{\bibinfo{person}{Maxim Naumov}, \bibinfo{person}{Michael Chien}, \bibinfo{person}{Paul Vandermersch}, \bibinfo{person}{Ujval Kapasi}, \bibinfo{person}{Boris Catanzaro}, {and} \bibinfo{person}{Michael Garland}.} \bibinfo{year}{2015}\natexlab{}.
\newblock \showarticletitle{{cuSPARSE} Library}. In \bibinfo{booktitle}{\emph{GPU Technology Conference ({GTC})}}.
\newblock


\bibitem[Raissi et~al\mbox{.}(2019)]%
        {raissi2019physics}
\bibfield{author}{\bibinfo{person}{Maziar Raissi}, \bibinfo{person}{Paris Perdikaris}, {and} \bibinfo{person}{George~Em Karniadakis}.} \bibinfo{year}{2019}\natexlab{}.
\newblock \showarticletitle{Physics-Informed Neural Networks: A Deep Learning Framework for Solving Forward and Inverse Problems Involving Nonlinear Partial Differential Equations}.
\newblock \bibinfo{journal}{\emph{J. Comput. Phys.}}  \bibinfo{volume}{378} (\bibinfo{year}{2019}), \bibinfo{pages}{686--707}.
\newblock


\bibitem[Rudikov et~al\mbox{.}(2024)]%
        {rudikov2024neural}
\bibfield{author}{\bibinfo{person}{Kirill Rudikov}, \bibinfo{person}{Anastasia Markeeva}, \bibinfo{person}{Vasily Bulatov}, {and} \bibinfo{person}{Dmitry Vetrov}.} \bibinfo{year}{2024}\natexlab{}.
\newblock \showarticletitle{Neural Functional Operator for Parametric {PDEs}}.
\newblock \bibinfo{journal}{\emph{arXiv preprint arXiv:2402.01030}} (\bibinfo{year}{2024}).
\newblock
\urldef\tempurl%
\url{https://arxiv.org/abs/2402.01030}
\showURL{%
\tempurl}


\bibitem[Ruge and St{\"u}ben(1987)]%
        {ruge1987algebraic}
\bibfield{author}{\bibinfo{person}{John~W. Ruge} {and} \bibinfo{person}{Klaus St{\"u}ben}.} \bibinfo{year}{1987}\natexlab{}.
\newblock \showarticletitle{Algebraic Multigrid ({AMG})}.
\newblock \bibinfo{journal}{\emph{Multigrid Methods}}  \bibinfo{volume}{3} (\bibinfo{year}{1987}), \bibinfo{pages}{73--130}.
\newblock


\bibitem[Saad(2003)]%
        {saad2003iterative}
\bibfield{author}{\bibinfo{person}{Yousef Saad}.} \bibinfo{year}{2003}\natexlab{}.
\newblock \bibinfo{booktitle}{\emph{Iterative Methods for Sparse Linear Systems} (\bibinfo{edition}{2} ed.)}.
\newblock \bibinfo{publisher}{SIAM}, \bibinfo{address}{Philadelphia, PA}.
\newblock


\bibitem[Sittoni et~al\mbox{.}(2026)]%
        {Sittoni2026neural}
\bibfield{author}{\bibinfo{person}{Pietro Sittoni}, \bibinfo{person}{Emanuele Zangrando}, \bibinfo{person}{Angelo~Alberto Casulli}, \bibinfo{person}{Nicola Guglielmi}, {and} \bibinfo{person}{Francesco Tudisco}.} \bibinfo{year}{2026}\natexlab{}.
\newblock \showarticletitle{{Neural-HSS}: Hierarchical Semi-Separable Neural {PDE} Solver}.
\newblock \bibinfo{journal}{\emph{arXiv preprint arXiv:2602.18248}} (\bibinfo{year}{2026}).
\newblock
\urldef\tempurl%
\url{https://arxiv.org/abs/2602.18248}
\showURL{%
\tempurl}


\bibitem[Teschner et~al\mbox{.}(2003)]%
        {teschner2003optimized}
\bibfield{author}{\bibinfo{person}{Matthias Teschner}, \bibinfo{person}{Bruno Heidelberger}, \bibinfo{person}{Matthias M{\"u}ller}, \bibinfo{person}{Danat Pomerantes}, {and} \bibinfo{person}{Markus~H. Gross}.} \bibinfo{year}{2003}\natexlab{}.
\newblock \showarticletitle{Optimized Spatial Hashing for Collision Detection of Deformable Objects}. In \bibinfo{booktitle}{\emph{Vision, Modeling, and Visualization (VMV)}}. \bibinfo{pages}{47--54}.
\newblock


\bibitem[Trifonov et~al\mbox{.}(2025)]%
        {trifonov2025gnn}
\bibfield{author}{\bibinfo{person}{Vladislav Trifonov}, \bibinfo{person}{Ekaterina Muravleva}, {and} \bibinfo{person}{Ivan Oseledets}.} \bibinfo{year}{2025}\natexlab{}.
\newblock \showarticletitle{Message-Passing {GNNs} Fail to Approximate Sparse Triangular Factorizations}.
\newblock \bibinfo{journal}{\emph{arXiv preprint arXiv:2502.01397}} (\bibinfo{year}{2025}).
\newblock
\urldef\tempurl%
\url{https://arxiv.org/abs/2502.01397}
\showURL{%
\tempurl}


\bibitem[Trifonov et~al\mbox{.}(2024)]%
        {trifonov2024linear}
\bibfield{author}{\bibinfo{person}{Vladislav Trifonov}, \bibinfo{person}{Alexander Rudikov}, \bibinfo{person}{Oleg Iliev}, \bibinfo{person}{Yuri~M. Laevsky}, \bibinfo{person}{Ivan Oseledets}, {and} \bibinfo{person}{Ekaterina Muravleva}.} \bibinfo{year}{2024}\natexlab{}.
\newblock \showarticletitle{Learning from Linear Algebra: A Graph Neural Network Approach to Preconditioner Design for Conjugate Gradient Solvers}.
\newblock \bibinfo{journal}{\emph{arXiv preprint arXiv:2405.15557}} (\bibinfo{year}{2024}).
\newblock
\urldef\tempurl%
\url{https://arxiv.org/abs/2405.15557}
\showURL{%
\tempurl}


\bibitem[Vaswani et~al\mbox{.}(2017)]%
        {vaswani2017attention}
\bibfield{author}{\bibinfo{person}{Ashish Vaswani}, \bibinfo{person}{Noam Shazeer}, \bibinfo{person}{Niki Parmar}, \bibinfo{person}{Jakob Uszkoreit}, \bibinfo{person}{Llion Jones}, \bibinfo{person}{Aidan~N. Gomez}, \bibinfo{person}{{\L}ukasz Kaiser}, {and} \bibinfo{person}{Illia Polosukhin}.} \bibinfo{year}{2017}\natexlab{}.
\newblock \showarticletitle{Attention Is All You Need}. In \bibinfo{booktitle}{\emph{Advances in Neural Information Processing Systems 30 (NIPS 2017)}}. \bibinfo{pages}{5998--6008}.
\newblock


\bibitem[Wu et~al\mbox{.}(2019)]%
        {chen2020simple}
\bibfield{author}{\bibinfo{person}{Felix Wu}, \bibinfo{person}{Amauri Souza}, \bibinfo{person}{Tianyi Zhang}, \bibinfo{person}{Christopher Fifty}, \bibinfo{person}{Tao Yu}, {and} \bibinfo{person}{Kilian Weinberger}.} \bibinfo{year}{2019}\natexlab{}.
\newblock \showarticletitle{Simplifying Graph Convolutional Networks}. In \bibinfo{booktitle}{\emph{Proceedings of the 36th International Conference on Machine Learning}} \emph{(\bibinfo{series}{Proceedings of Machine Learning Research}, Vol.~\bibinfo{volume}{97})}. \bibinfo{publisher}{PMLR}, \bibinfo{pages}{6861--6871}.
\newblock
\urldef\tempurl%
\url{http://proceedings.mlr.press/v97/wu19e.html}
\showURL{%
\tempurl}


\bibitem[Xu et~al\mbox{.}(2025)]%
        {Xu2025neural}
\bibfield{author}{\bibinfo{person}{Tianshi Xu}, \bibinfo{person}{Rui~Peng Li}, {and} \bibinfo{person}{Yuanzhe Xi}.} \bibinfo{year}{2025}\natexlab{}.
\newblock \showarticletitle{Neural Approximate Inverse Preconditioners}.
\newblock \bibinfo{journal}{\emph{arXiv preprint arXiv:2510.13034}} (\bibinfo{year}{2025}).
\newblock
\urldef\tempurl%
\url{https://arxiv.org/abs/2510.13034}
\showURL{%
\tempurl}


\bibitem[Yamazaki et~al\mbox{.}(2020)]%
        {yamazaki2020performance}
\bibfield{author}{\bibinfo{person}{Ichitaro Yamazaki}, \bibinfo{person}{Stanimire Tomov}, {and} \bibinfo{person}{Jack Dongarra}.} \bibinfo{year}{2020}\natexlab{}.
\newblock \showarticletitle{Mixed-Precision {Cholesky} {QR} Factorization and Its Parallelization on {GPUs}}.
\newblock \bibinfo{journal}{\emph{Parallel Comput.}}  \bibinfo{volume}{99} (\bibinfo{year}{2020}), \bibinfo{pages}{102693}.
\newblock


\bibitem[Yang et~al\mbox{.}(2025)]%
        {Yang2025sparse}
\bibfield{author}{\bibinfo{person}{Zherui Yang}, \bibinfo{person}{Zhehao Li}, \bibinfo{person}{Kangbo Lyu}, \bibinfo{person}{Yixuan Li}, \bibinfo{person}{Tao Du}, {and} \bibinfo{person}{Ligang Liu}.} \bibinfo{year}{2025}\natexlab{}.
\newblock \showarticletitle{Learning Sparse Approximate Inverse Preconditioners for Conjugate Gradient Solvers on {GPUs}}.
\newblock \bibinfo{journal}{\emph{arXiv preprint arXiv:2510.27517}} (\bibinfo{year}{2025}).
\newblock
\urldef\tempurl%
\url{https://arxiv.org/abs/2510.27517}
\showURL{%
\tempurl}


\end{thebibliography}



\begin{thebibliography}{4}


\ifx \showCODEN    \undefined \def \showCODEN     #1{\unskip}     \fi
\ifx \showISBNx    \undefined \def \showISBNx     #1{\unskip}     \fi
\ifx \showISBNxiii \undefined \def \showISBNxiii  #1{\unskip}     \fi
\ifx \showISSN     \undefined \def \showISSN      #1{\unskip}     \fi
\ifx \showLCCN     \undefined \def \showLCCN      #1{\unskip}     \fi
\ifx \shownote     \undefined \def \shownote      #1{#1}          \fi
\ifx \showarticletitle \undefined \def \showarticletitle #1{#1}   \fi
\ifx \showURL      \undefined \def \showURL       {\relax}        \fi
\providecommand\bibfield[2]{#2}
\providecommand\bibinfo[2]{#2}
\providecommand\natexlab[1]{#1}
\providecommand\showeprint[2][]{arXiv:#2}

\bibitem[Kipf and Welling(2017)]%
        {kipf2017semi}
\bibfield{author}{\bibinfo{person}{Thomas~N. Kipf} {and} \bibinfo{person}{Max
  Welling}.} \bibinfo{year}{2017}\natexlab{}.
\newblock \showarticletitle{Semi-Supervised Classification with Graph
  Convolutional Networks}. In \bibinfo{booktitle}{\emph{International
  Conference on Learning Representations (ICLR)}}.
\newblock


\bibitem[Liu et~al\mbox{.}(2021)]%
        {Liu2021swin}
\bibfield{author}{\bibinfo{person}{Ze Liu}, \bibinfo{person}{Yutong Lin},
  \bibinfo{person}{Yue Cao}, \bibinfo{person}{Han Hu}, \bibinfo{person}{Yixuan
  Wei}, \bibinfo{person}{Zheng Zhang}, \bibinfo{person}{Stephen Lin}, {and}
  \bibinfo{person}{Baining Guo}.} \bibinfo{year}{2021}\natexlab{}.
\newblock \showarticletitle{{Swin} Transformer: Hierarchical Vision Transformer
  using Shifted Windows}.
\newblock \bibinfo{journal}{\emph{arXiv preprint arXiv:2103.14030}}
  (\bibinfo{year}{2021}).
\newblock
\urldef\tempurl%
\url{https://arxiv.org/abs/2103.14030}
\showURL{%
\tempurl}


\bibitem[Vaswani et~al\mbox{.}(2017)]%
        {vaswani2017attention}
\bibfield{author}{\bibinfo{person}{Ashish Vaswani}, \bibinfo{person}{Noam
  Shazeer}, \bibinfo{person}{Niki Parmar}, \bibinfo{person}{Jakob Uszkoreit},
  \bibinfo{person}{Llion Jones}, \bibinfo{person}{Aidan~N. Gomez},
  \bibinfo{person}{{\L}ukasz Kaiser}, {and} \bibinfo{person}{Illia
  Polosukhin}.} \bibinfo{year}{2017}\natexlab{}.
\newblock \showarticletitle{Attention Is All You Need}. In
  \bibinfo{booktitle}{\emph{Advances in Neural Information Processing Systems
  30 (NIPS 2017)}}. \bibinfo{pages}{5998--6008}.
\newblock


\bibitem[Yang et~al\mbox{.}(2025)]%
        {Yang2025sparse}
\bibfield{author}{\bibinfo{person}{Zherui Yang}, \bibinfo{person}{Zhehao Li},
  \bibinfo{person}{Kangbo Lyu}, \bibinfo{person}{Yixuan Li},
  \bibinfo{person}{Tao Du}, {and} \bibinfo{person}{Ligang Liu}.}
  \bibinfo{year}{2025}\natexlab{}.
\newblock \showarticletitle{Learning Sparse Approximate Inverse Preconditioners
  for Conjugate Gradient Solvers on {GPUs}}.
\newblock \bibinfo{journal}{\emph{arXiv preprint arXiv:2510.27517}}
  (\bibinfo{year}{2025}).
\newblock
\urldef\tempurl%
\url{https://arxiv.org/abs/2510.27517}
\showURL{%
\tempurl}


\end{thebibliography}

\clearpage
\twocolumn

\renewcommand{\topfraction}{0.99}
\renewcommand{\bottomfraction}{0.99}
\renewcommand{\textfraction}{0.0}
\renewcommand{\floatpagefraction}{0.6}

\begin{figure}[!htbp]
  \centering
  \includegraphics[width=0.92\linewidth]{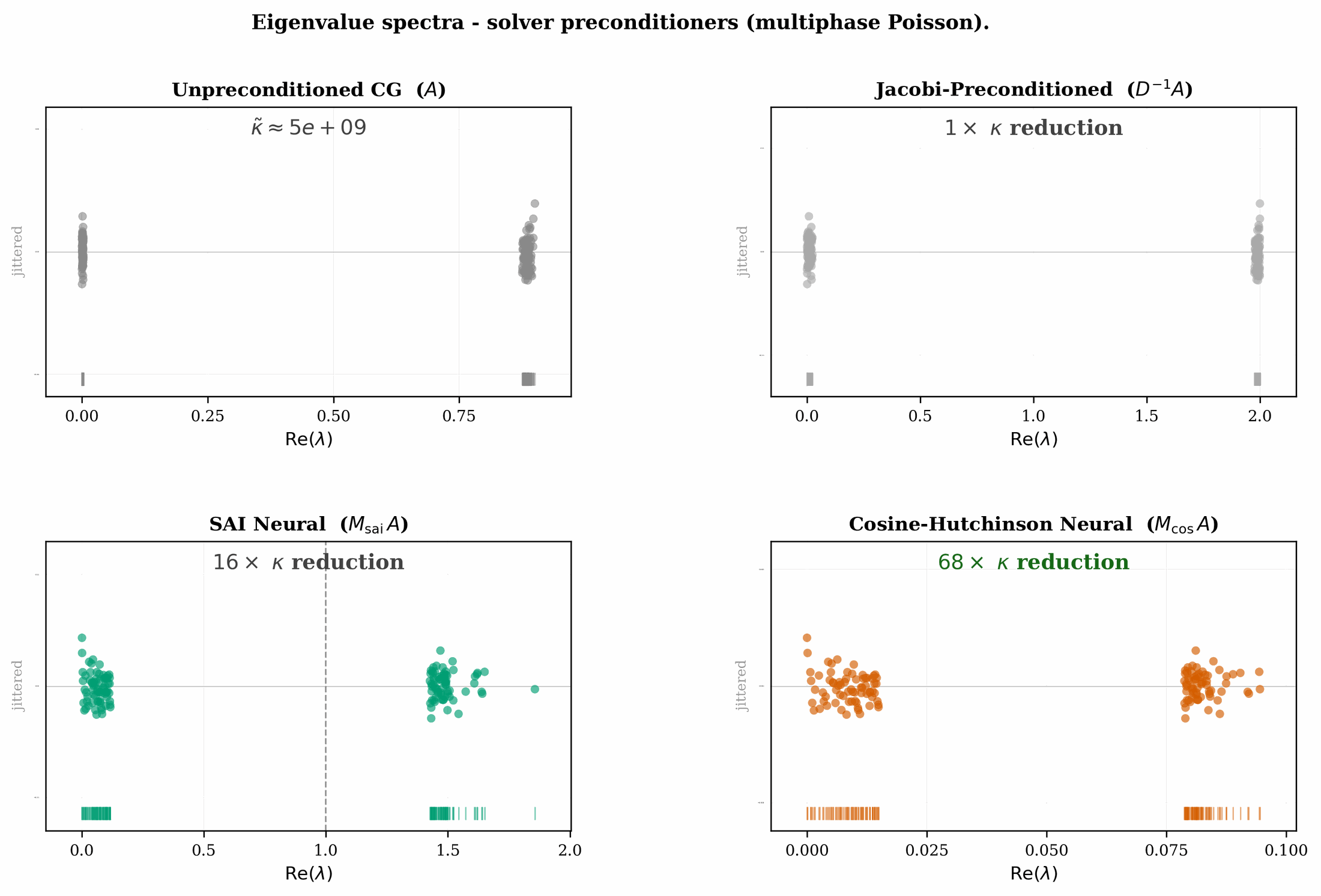}
  \caption{Spectra of $MA$ at $N\!=\!1\,024$ on a representative multiphase Poisson frame
    (full eigendecomposition; $y$ jittered). \emph{Top:} unpreconditioned (left) and
    Jacobi (right). \emph{Bottom:} same architecture trained with SAI (left, $16\!\times$
    $\kappa$ reduction, cluster anchored near $\lambda\!=\!\|A\|$ as the Frobenius
    objective demands) versus cosine-Hutchinson (right, $68\!\times$ reduction, cluster
    wherever angular alignment is easiest). The $4.3\times$ gap is attributable entirely
    to the loss. The bimodal structure in both neural preconditioners reflects the
    two-phase density contrast of the benchmark.}
  \label{fig:loss:eigenvalues}
\end{figure}

\begin{figure}[!htbp]
  \centering
  \includegraphics[width=0.92\linewidth]{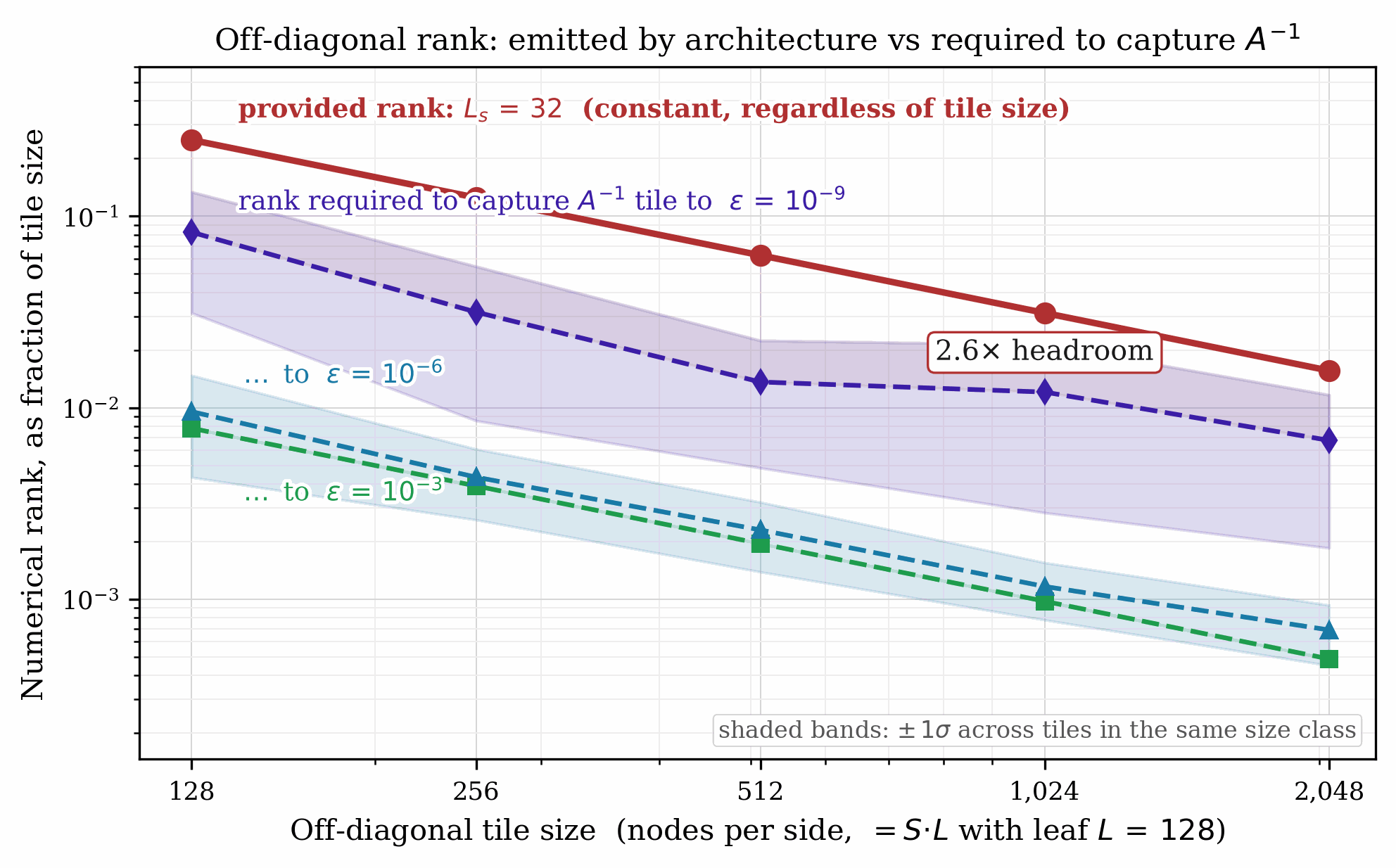}
  \caption{Off-diagonal rank audit: provided vs.\ required. Red: the
    architecture-provided rank fraction $L_s/(SL)$ at fixed $L_s\!=\!32$, plotted
    against off-diagonal distance class $S$. Dashed: the mean numerical rank
    fraction required for truncated-SVD approximations of $A^{-1}$ tiles to
    satisfy $\|X-X_r\|_F/\|X\|_F\!\le\!\varepsilon$ at
    $\varepsilon\!\in\!\{10^{-3},10^{-6},10^{-9}\}$, with bands giving $\pm 1\sigma$
    over tiles in the same distance class. Required rank drops monotonically with
    $S$ (far interactions are more compressible) while the architecture's fixed
    rank stays above it everywhere, with $\sim\!2.6\times$ headroom in the
    largest-distance class even at $\varepsilon\!=\!10^{-9}$ --- the structural
    evidence that one shared off-diagonal rank covers every far-field tile
    (justifying $\textsc{c1}$, \S\ref{sec:method:stacks}).}
  \label{fig:rank-audit}
\end{figure}

\begin{figure}[!htbp]
  \centering
  \includegraphics[width=\linewidth]{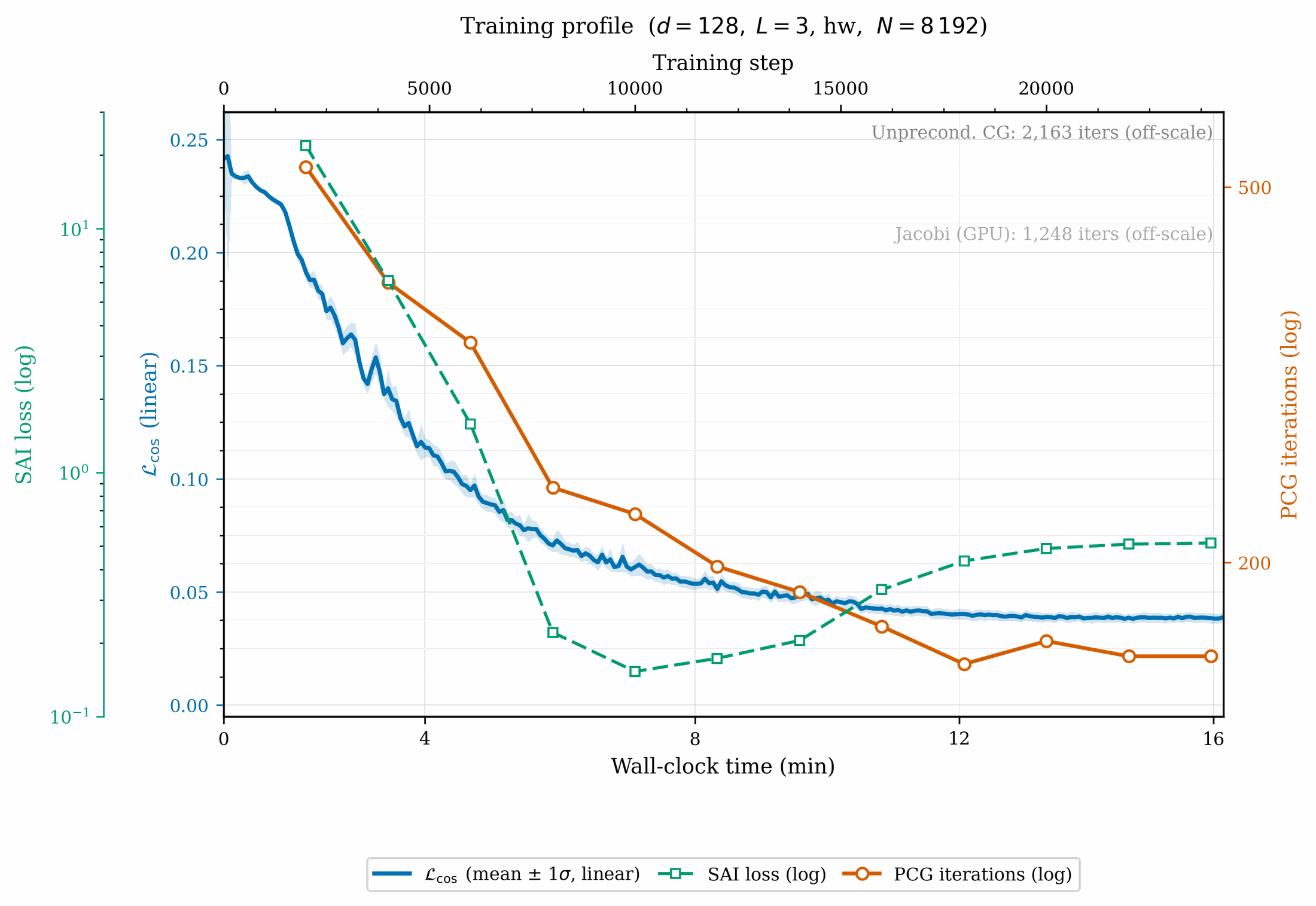}
  \caption{Training profile at $N\!=\!8\,192$ (default config, $d\!=\!128$, $n_l\!=\!3$,
    highways on). Blue: cosine-Hutchinson loss $\mathcal{L}_{\cos}$ (mean $\pm 1\sigma$
    over last $100$ steps, linear axis). Green dashed: SAI loss~\cite{Yang2025sparse} on
    a held-out frame at the same checkpoints (log axis); the down-then-up trajectory is
    the empirical signature of the loss-ablation argument in
    \S\ref{sec:results:training} --- the model keeps reducing the cosine objective and
    the PCG iteration count after the SAI surrogate has bottomed out and started to
    climb. Orange: PCG iterations to $\mathrm{rtol}\!=\!10^{-8}$ on that frame (log).
    Bottom axis: wall-clock minutes; top axis: optimizer steps. Grey:
    unpreconditioned CG and Jacobi baselines.}
  \label{fig:training:dynamics}
\end{figure}

\FloatBarrier

\begin{figure}[!htbp]
  \centering
  \includegraphics[width=\linewidth]{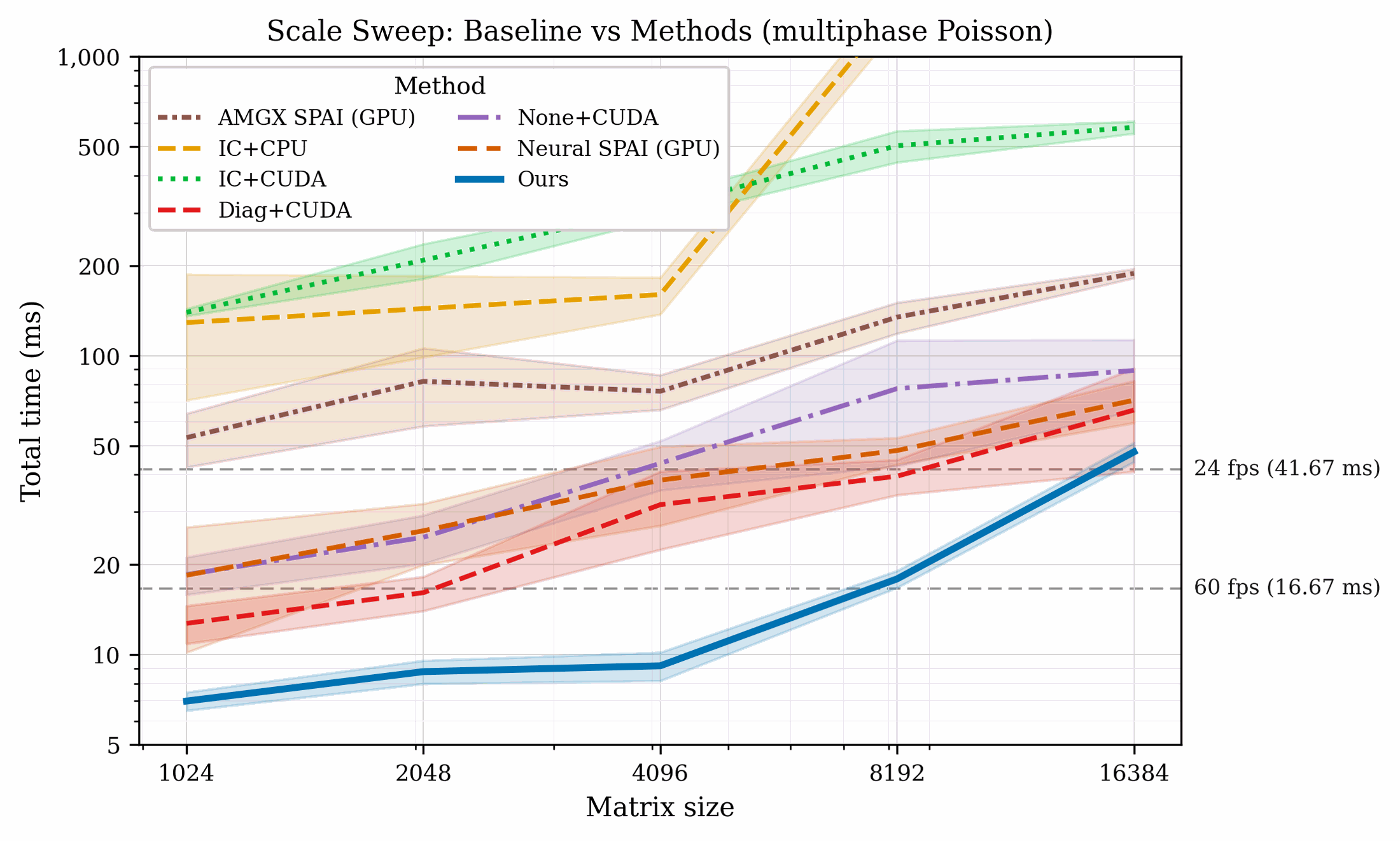}
  \caption{End-to-end PCG solve time on the multiphase Poisson benchmark across five
    problem sizes, $20$ frames per scale (mean $\pm 1\sigma$ band; relative residual
    tolerance $10^{-8}$). Dashed grey lines mark the $24$ and $60$\,fps interactive
    budgets. Ours covers $\sim\!143$ down to $\sim\!21$\,fps; the closest classical GPU
    baseline (Jacobi) covers $\sim\!79$ down to $\sim\!15$\,fps; the closest learned
    baseline (neural SPAI~\cite{Yang2025sparse}, re-trained per scale on our benchmark with
    its CUDA apply path) covers $\sim\!54$ down to $\sim\!14$\,fps. AMGX SPAI and GPU IC
    sit above $40$\,ms across all scales despite much lower iteration counts, dominated
    by unparallelizable triangular solves or V-cycle synchronization. Full per-method times
    and iteration counts appear in Table~\ref{tab:full_perf_table_main}.}
  \label{fig:scale_methods}
\end{figure}

\begin{figure}[!htbp]
  \centering
  \includegraphics[width=0.96\linewidth]{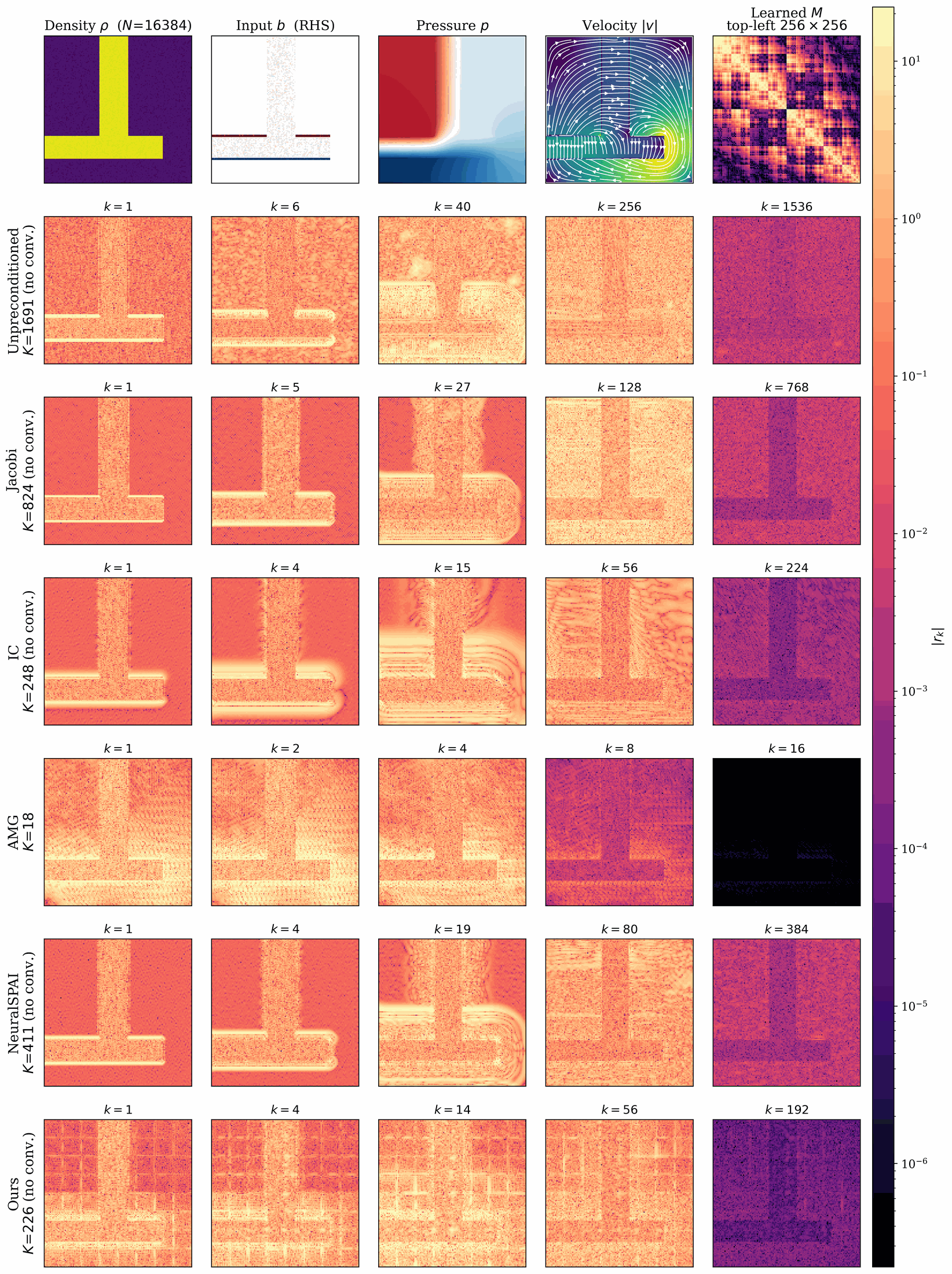}
  \caption{Multiscale error transport across preconditioner families on a challenging
    multiphase Poisson frame (closed cross-barrier topology, stiff density contrast),
    distinct from the teaser frame. \emph{Top row:} density $\rho$, right-hand side $b$,
    pressure $p$, $|v|$, and a top-left view of the learned $M$.
    \emph{Rows below, top to bottom:} unpreconditioned CG, Jacobi, IC, AMG, ours. Each
    row shows $|r_k|$ on a shared log color scale at five iteration snapshots; row labels
    give each method's iteration count to $\mathrm{rtol}\!=\!10^{-8}$. At $k\!=\!1$,
    Jacobi and IC damp error only locally around $\mathrm{supp}(b)$, while AMG and ours
    have already attenuated the residual across the whole domain --- direct evidence that
    the learned $M$ routes correction signals across graph-distant regions in a single
    apply, the global routing pattern the highway buffers implement
    (\S\ref{sec:method:highways}). AMG matches the spread but pays in V-cycle
    synchronization per iteration; ours runs the inner loop as a single
    CUDA-Graph-captured sequence of batched GEMMs (\S\ref{sec:application}).}
  \label{fig:convergence}
\end{figure}

\begin{figure}[!htbp]
  \centering
  \includegraphics[width=\linewidth]{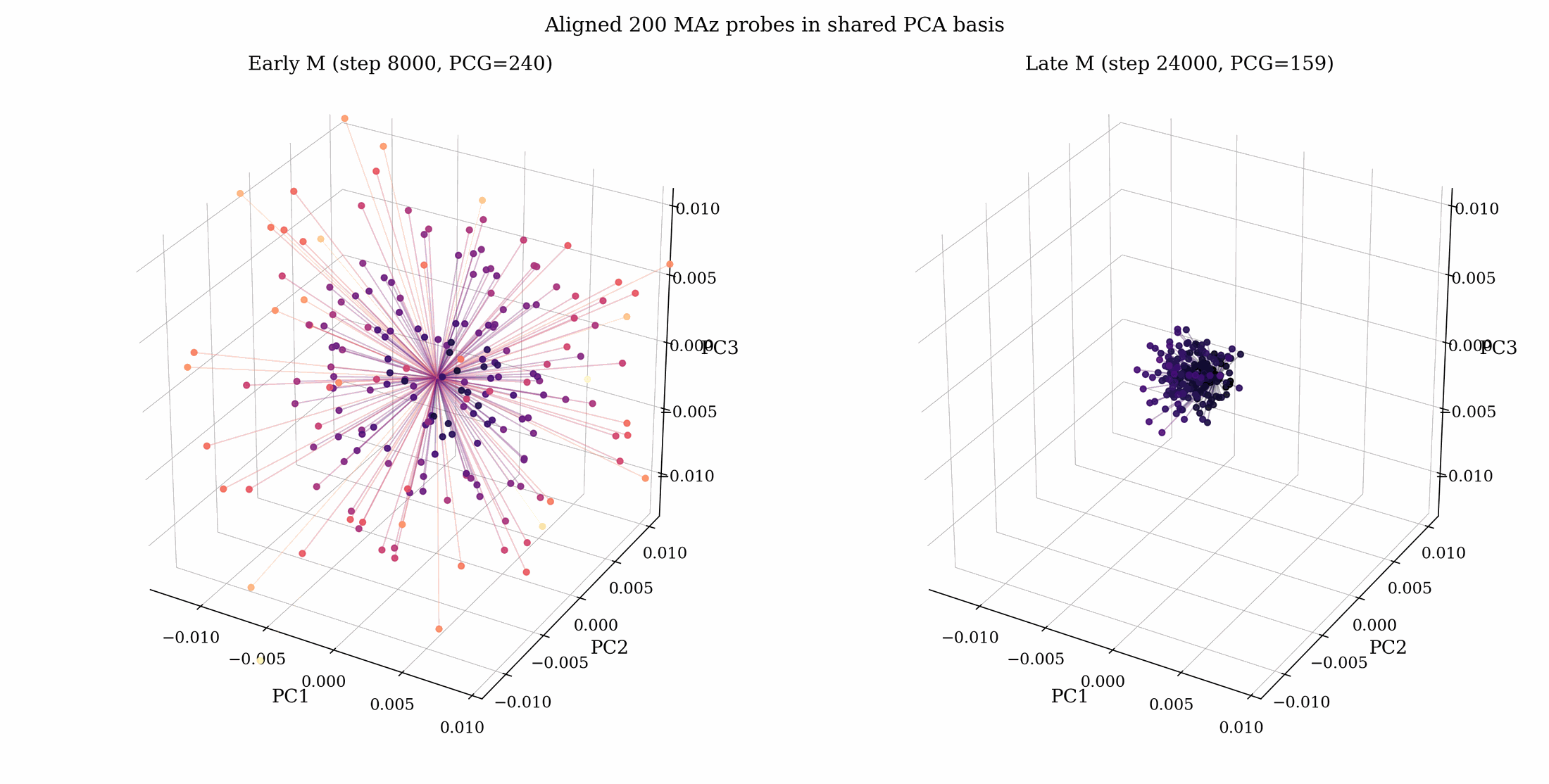}
  \caption{Probe-alignment dynamics during training at $N\!=\!8\,192$
    ($d\!=\!128$, $n_l\!=\!3$, highways on), over checkpoints in the
    $8$k--$24$k step range. Each probe (point) is plotted by the angle its
    image $MA\mathbf{z}$ makes with $\mathbf{z}$ in $\mathbb{R}^N$, early in
    training (left) vs.\ late (right). As $\mathcal{L}_{\cos}$ drives the
    per-probe angle to zero, the point cloud collapses inward --- the
    spatial signature of the spectral clustering of $MA$ that
    Fig.~\ref{fig:loss:eigenvalues} shows at $N\!=\!1\,024$ and that
    Fig.~\ref{fig:training:dynamics} shows in PCG iteration counts.}
  \label{fig:probe-alignment}
\end{figure}

\end{document}


\title{Supplementary Material: Hierarchical Transformer Preconditioning for Interactive Physics Simulation}

\author{Carl Osborne}
\affiliation{%
  \institution{MIT CSAIL}
  \city{Cambridge}
  \country{USA}}
\email{osbo@mit.edu}

\author{Minghao Guo}
\affiliation{%
  \institution{MIT CSAIL}
  \city{Cambridge}
  \country{USA}}
\email{guomh2014@gmail.com}

\author{Crystal Owens}
\affiliation{%
  \institution{MIT CSAIL}
  \city{Cambridge}
  \country{USA}}
\email{crystalo@mit.edu}

\author{Wojciech Matusik}
\affiliation{%
  \institution{MIT CSAIL}
  \city{Cambridge}
  \country{USA}}
\email{wojciech@csail.mit.edu}

\renewcommand{\shortauthors}{Osborne et al.}

\maketitle
\thispagestyle{plain}

This document collects the implementation details, dataset construction,
per-scale performance numbers, full train$\,\times\,$eval generalization grid
behind Table~6, and the precision/apply-path discussion that did not fit the
main-paper page budget. Section numbering is independent of the main paper.

\section{Experimental Setup --- Full Details}
\label{supp:experimental-setup}

\paragraph{Hardware.}
All GPU experiments run under a batch scheduler on a shared institutional GPU
cluster. Each job allocates one NVIDIA H200 (140\,GB HBM3), eight Intel Xeon
Platinum~8580 host cores, and 32\,GiB host DRAM alongside the device (partial-node
slice on dual-socket nodes with eight GPUs per node). CPU baselines use the same
eight host cores.

\paragraph{Software stack.}
GPU code is built with the CUDA~12 toolchain and a PyTorch \texttt{2.6.x} build
from a fixed Conda environment, using \verb|torch.compile| for graph capture. CPU
baselines use Eigen, SciPy, and PyAMG. Sparse matrix-vector products on the GPU use
the CSR layout produced by the dataset generator (Sec.~\ref{supp:dataset}). PCG
timing uses single-graph CUDA Graph replay for methods that admit it
(unpreconditioned CG, Jacobi, AMGX SPAI, neural SPAI, ours); IC- and AMG-class methods
are timed per launch since their sequential triangular solves and V-cycle synchronization
preclude single-graph capture.

\paragraph{Default model configuration.}
Unless noted, our model uses $d\!=\!128$, $n_l\!=\!3$, $L\!=\!128$,
$p_{\mathrm{diag}}\!=\!1$, $p_{\mathrm{off}}\!=\!4$ (so $L_s\!=\!32$),
$n_{\mathrm{gcn}}\!=\!2$, $h\!=\!8$ attention heads, highways enabled
(\textsf{d128\_L3\_hw} in figures). This single checkpoint is reused at every $N$
in the main performance table. Sec.~\ref{supp:architecture} expands the per-stage
shapes and Sec.~\ref{supp:training} the optimizer and probe details.

\paragraph{Baseline tuning.}
GPU IC (\textsc{multicolor\_dilu}) and GPU AMG settings follow AMGX vendor
defaults. At $N\!=\!8\,192$ we additionally swept the nearest neighboring AMGX
hyperparameter configurations (smoother type, coarsening strategy, maximum levels)
and verified that none meaningfully reduces total time relative to the default.
CPU PyAMG uses its default classical AMG cycle. AMGX SPAI is included as the
sparse-approximate-inverse data point; we keep its vendor defaults because the
neural SPAI baseline whose public implementation does not currently support our
problem class (main paper, footnote in \S7.1) would be the natural learned
comparison, not AMGX SPAI's hand-tuned variants.

\paragraph{Convergence criterion.}
All reported solve times and iteration counts are to a relative residual tolerance
of $\|\mathbf{r}_k\|_2 / \|\mathbf{r}_0\|_2 \!\le\! 10^{-8}$. This is two to three
orders of magnitude tighter than the $\sim\!10^{-3}$ accuracy typical of
graphics-grade pressure projection. We chose the tighter tolerance so per-method
comparisons reflect preconditioner quality rather than early termination; at
$10^{-3}$ every method finishes in fewer iterations, but the relative ordering and
ratios we report are preserved.

\paragraph{Precision.}
PCG scalar accumulators (dot products, residual norms, step sizes) use float64;
preconditioner weights and the apply path use float32. Sec.~\ref{supp:precision}
discusses why this split is sufficient on stiff systems at our target tolerance
and where each component sits within the apply-path memory budget.

\section{Network Architecture Details}
\label{supp:architecture}

This section expands the four-stage pipeline of main-paper~\S4 with the exact
tensor shapes and module-level choices needed to reproduce the network.

\paragraph{Encoder (Sec.~4.1).}
The per-node feature vector concatenates the simulator-exposed scalar fields
(density, pressure, geometric position, boundary indicators) with a broadcast
global context of dimension $d_{\mathrm{glob}}\!=\!12$ summarizing per-frame
statistics of the assembled system. A two-layer MLP with GELU activations lifts
this to width~$d$. The encoder then applies $n_{\mathrm{gcn}}\!=\!2$ residual
graph-convolutional layers~\cite{kipf2017semi} that use $A$ itself (normalized by
$\mathrm{diag}(A)$) as the message-passing weight. The encoder is the only stage
that touches individual edges; downstream stages consume only the per-node
embedding tensor of shape $(N, d)$.

\paragraph{Diagonal attention stack (Sec.~4.2).}
Encoder embeddings are reshaped from $(N, d)$ to $(K, L, d)$ and fed through $n_l$
transformer blocks with attention restricted to within-leaf token pairs (the
Swin-style local window of~\cite{Liu2021swin,vaswani2017attention} adapted to a
one-dimensional index range). Each block uses $h\!=\!8$ heads, head dimension
$d/h\!=\!16$, a GELU-activated FFN with $4d$ hidden width, pre-norm LayerNorm, and
residual connections. Attention logits carry a learned per-head bias produced by
a two-layer MLP from the four-dimensional edge descriptor
$(\Delta\mathbf{x}_{ij}, A_{ij})$.

\paragraph{Off-diagonal attention stack (Sec.~4.2).}
The same encoder embeddings are pooled along each tile's row- and column-strip
$\mathcal{I}_m, \mathcal{J}_m$ to obtain per-leaf coarse summaries, then
condensed along the in-strip axis to $L_s\!=\!L/p_{\mathrm{off}}\!=\!32$ tokens
per tile via uniform mean-pooling. Tile-grouped tokens then arrive at the
attention stack with shape $(M_{\mathcal{H}}, L_s, d)$, independent of the
physical tile size $S_mL$, and are processed by $n_l$ transformer blocks
identical in form to the diagonal stack (same head count, FFN width,
normalization). Edge biases for each in-tile token pair are computed from the
mean of $(\Delta\mathbf{x}_{ij}, A_{ij})$ over all node pairs spanning the
corresponding sub-strips.

\paragraph{Highway buffers (Sec.~4.3).}
After every attention sublayer in either stack, block-token embeddings are
scatter-added into per-layer row, column, and global buffers
$\mathbf{r}_{\mathrm{hw}}, \mathbf{c}_{\mathrm{hw}}\!\in\!\mathbb{R}^{B\times N\times d}$
and $\mathbf{g}_{\mathrm{hw}}\!\in\!\mathbb{R}^{B\times d}$ (off-diagonal tile
tokens are first repeat-interleaved back to full $L$-leaf resolution so each tile
contributes uniformly across its $S$ leaves). Before the FFN sublayer, each
token's row, column, and global highway slices are concatenated with its own
$d$-dim embedding and the resulting $4d$-wide vector is fed through the FFN input
projection. The four-channel mix (2D local + 1D row + 1D column + 0D broadcast)
is repeated independently in every transformer block; the buffers themselves are
not residual across blocks.

\paragraph{Decoder heads (Sec.~4.4).}
Three lightweight heads project token embeddings to the factor tensors that the
apply path consumes. The leaf head is a two-layer MLP per diagonal-stream token
followed by a reshape into $F_k\!\in\!\mathbb{R}^{L\times L}$. The two
off-diagonal heads are single linear layers producing
$U_m, V_m\!\in\!\mathbb{R}^{L_s\times L_s}$ from the off-diagonal-stream tokens
($L_s$ tokens per tile, one head emits $U_m$, the other $V_m$). The two node
heads are single linear layers producing the bridge matrices
$\tilde U_k, \tilde V_k\!\in\!\mathbb{R}^{L\times L_s}$ from the diagonal-stream
per-node embeddings. A final scalar gate produces the Jacobi-residual weight
$\lambda_i$.

\paragraph{Apply-path tensor layout.}
The decoder outputs are concatenated into a single packed tensor of width
$P\!=\!KL^2 + M_{\mathcal{H}} L_s^2 + 2NL_s + N$ with strides aligned to the
$\mathcal{H}$-matrix partition. With the weak-admissibility $\mathcal{H}$-matrix partition at
admissibility parameter $\eta\!=\!1$, the number of \emph{unique} off-diagonal
tiles satisfies $M_{\mathcal{H}}\!=\!K\!-\!1$ (geometric sum
$K/2 + K/4 + \cdots + 1$ over the strict upper triangle); the symmetric lower
triangle reuses the same $B_m$ via the transposed apply
$\mathbf{t}^{\mathrm{c}}_m\!=\!B_m^\top\mathbf{s}^{\mathrm{r}}_m$ from the
main-paper Sec.~5 equations. Concrete factor-tensor sizes at the scales of the main paper's
performance table (main paper Table~4) are listed in Sec.~\ref{supp:precision}.

\paragraph{Explicit off-diagonal apply equations (main paper Sec.~5).}
For leaf residual $\mathbf{r}_k\!\in\!\mathbb{R}^{L}$ and tile leaf-ranges
$\mathcal{R}_m,\mathcal{C}_m$, the off-diagonal contribution follows the same
FMM-style chain used in the main text:
\begin{align}
  \hat{\mathbf{u}}_k &= \tilde U_k^\top\mathbf{r}_k,\qquad
  \hat{\mathbf{v}}_k = \tilde V_k^\top\mathbf{r}_k && \text{(restriction)} \\
  \mathbf{s}^{\mathrm{r}}_m &= \sum_{k\in\mathcal{R}_m}\hat{\mathbf{u}}_k,\qquad
  \mathbf{s}^{\mathrm{c}}_m = \sum_{k\in\mathcal{C}_m}\hat{\mathbf{v}}_k
  && \text{(strip aggregation)} \\
  \mathbf{t}^{\mathrm{c}}_m &= B_m^\top\mathbf{s}^{\mathrm{r}}_m,\qquad
  \mathbf{t}^{\mathrm{r}}_m = B_m\mathbf{s}^{\mathrm{c}}_m && \text{(coarse coupling)} \\
  \Delta\mathbf{y}_k &= \tilde U_k\,{\textstyle\sum_{m:k\in\mathcal{R}_m}\!\mathbf{t}^{\mathrm{r}}_m}
                    +  \tilde V_k\,{\textstyle\sum_{m:k\in\mathcal{C}_m}\!\mathbf{t}^{\mathrm{c}}_m}
  && \text{(prolongation).}
\end{align}
The stage-wise shape progression for one tile of span $S_mL$ is
$S_mL \rightarrow S_mL_s \rightarrow L_s \rightarrow L_s \rightarrow S_mL_s \rightarrow S_mL$.
In implementation this is four batched GEMMs with shapes
$(K,L,L_s)$, $(M_{\mathcal{H}},L_s,L_s)$, $(K,L,L_s)$ (restriction, coupling,
prolongation; two directional bridge passes share the same launch shape), plus two
partition-indexed scatter-adds for strip aggregation and redistribution.

\section{Training Procedure}
\label{supp:training}

\paragraph{Optimizer and schedule.}
We train with AdamW (PyTorch defaults for $\beta_1,\beta_2,\epsilon$,
weight decay $10^{-4}$) at initial learning rate $2\!\times\!10^{-4}$. Learning
rate is reduced on plateau with \texttt{ReduceLROnPlateau} (factor $0.5$,
patience $5$ log steps, relative threshold $5\!\times\!10^{-3}$, minimum
$\max(\mathrm{lr}\!\times\!10^{-3}, 10^{-6})$). Gradients are globally clipped to
$\ell_2$ norm $1$ before each AdamW step.

\paragraph{Auto-stop.}
Training proceeds for at most $10^5$ optimizer steps but is terminated early once
the LR scheduler reaches \texttt{min\_lr} \emph{and} the cosine-Hutchinson loss
has failed to improve by the same $5\!\times\!10^{-3}$ relative threshold over
$10$ consecutive log steps (twice the LR-scheduler patience). At our default
configuration this fires at roughly $24\,300$ steps (main paper, \S7.3).

\paragraph{Probe vectors.}
At each step we draw a batch of $K_z\!=\!\max(64, \lceil\sqrt{N}\rceil)$ probe
vectors $\mathbf{Z}\!\sim\!\mathcal{N}(0, I)$ and apply two damped-Jacobi
smoothing sweeps with $\omega\!=\!0.6$ (main paper, Eq.~8) to redistribute probe
energy toward lower spatial frequencies. Probes are detached after smoothing, so
no gradient flows back through the smoother. The constant $64$ floor on $K_z$
keeps gradient noise bounded at the smallest scales; the $\sqrt{N}$ growth keeps
the per-step gradient signal-to-noise ratio approximately constant as $N$ grows.

\paragraph{Gradient accumulation.}
Each optimizer step aggregates gradients over four random cached training
contexts (graph, $A$, masks, padded sizes, smoothed probes). Contexts are
precomputed once and cached on disk so the per-step cost is dominated by the
forward/backward through the network rather than dataset assembly.

\paragraph{$\mathcal{H}$-matrix partition.}
The weak-admissibility partition is constructed once per scale at training start
with admissibility parameter $\eta\!=\!1$ and the leaf count $K\!=\!N/L$
determined by the scale. The partition is keyed to $N$ and reused unchanged on
every test frame at that scale; no part of the partition depends on $A$ or the
right-hand side.

\paragraph{Wall-clock budget.}
End-to-end training of the default configuration is $\sim\!16$\,min on one
H200 at $N\!=\!8\,192$, with the model overtaking GPU Jacobi at step
$\sim\!2\,000$ ($\sim\!1.3$\,min) and dropping under $200$ PCG iterations by
step $\sim\!12\,000$ ($\sim\!8$\,min). The training profile is plotted in main
paper Fig.~4.

\section{Multiphase Poisson Benchmark Generation}
\label{supp:dataset}

This appendix gives the precise procedure for generating each frame of the
multiphase Poisson benchmark used throughout the main paper.

\paragraph{Grid and ordering.}
Each frame targets $N$ degrees of freedom on a 2D structured grid of dimensions
$W\!\times\!H$ with $W\!\cdot\!H\!\ge\!N$ and $W,\,H$ chosen as close to square as
possible. All $W\!\cdot\!H$ cells are sorted by Morton (Z-order) code and
truncated to the first $N$ cells, giving a contiguous Morton-ordered index set.
Edges and density values follow the same ordering. We use a structured grid
(rather than a particle cloud) so that the assembled $A$ has the canonical
5-point Laplacian sparsity pattern, isolating the effect of the heterogeneous
coefficient field on conditioning.

\paragraph{Density field.}
Per frame the heavy density $\rho_{\text{heavy}}$ is drawn log-uniformly from
$[5, 100]$ with $\rho_{\text{light}}\!=\!1$. We then sample $n_b\!\sim\!\mathrm{Uniform}\{1,2,3\}$
\emph{barriers}; each barrier is independently
\begin{enumerate}
\item assigned an orientation (vertical or horizontal, equiprobable);
\item given a center coordinate drawn uniformly from $[0.2, 0.8]$ along the cross
axis, in normalized grid coordinates;
\item given a thickness drawn uniformly from $[0.05, 0.20]$;
\item given a gap topology drawn uniformly from
$\{\text{top},$ $\text{bottom},$ $\text{middle\_hole},$ $\text{closed}\}$.
\end{enumerate}
A cell is heavy iff it lies in the heavy region of \emph{any} barrier, so
multiple barriers can overlap to form cross- or L-shaped inclusions. Each cell's
density is finally multiplied by $1\!+\!\mathcal{N}(0, 0.05^2)$ for fine-grain
symmetry breaking.

\paragraph{Operator assembly.}
$A$ is the standard 5-point pressure-Poisson Laplacian with harmonic-mean face
conductance $w_{ij}\!=\!2\rho_i\rho_j/(\rho_i\!+\!\rho_j)$, evaluated on
cell-cell faces. Boundary cells use zero-flux (Neumann) boundary conditions, as
is standard for incompressible pressure-projection in graphics-grade fluid
simulators. We do \emph{not} apply a zero-mean constraint to the resulting
null-space mode; CG handles it implicitly when the right-hand side is consistent.

\paragraph{Right-hand side.}
For each frame we draw $\mathbf{b}\!\sim\!\mathcal{N}(0, I)$ projected onto the
orthogonal complement of the constant vector ($\mathbf{1}^\top\mathbf{b}\!=\!0$
to ensure consistency with the Neumann null-space). This is a worst-case
right-hand-side distribution from a preconditioner's perspective: it weights all
spatial frequencies equally, including the low ones that are hardest for local
preconditioners.

\paragraph{Per-scale dataset sizes.}
For each of five scales, spanning $N=1\,024$ to $N=16\,384$, we generate
$100$ training frames and $20$ test frames with disjoint seeds.
Training and test split the same generator distribution.

\section{Per-frame Iteration-count Variability}
\label{supp:full_perf_table}

The consolidated per-scale times and mean iteration counts have moved to the
main paper, Table~4. This section reports the across-frame variability
($\pm 1\sigma$ over the same $20$ test frames per scale) that the main table
omits to fit the column budget.

\begin{table}[H]
\centering
\caption{Per-frame mean$\pm$std PCG iteration counts on the multiphase Poisson
benchmark ($\mathrm{rtol}\!=\!10^{-8}$, $20$ frames per scale). Bold row marks
the proposed method.}
\label{tab:full_perf_table}
\footnotesize\setlength{\tabcolsep}{3pt}
\resizebox{\columnwidth}{!}{%
\begin{tabular}{@{}l ccccc@{}}
\toprule
Method & $N\!=\!1\,024$ & $N\!=\!2\,048$ & $N\!=\!4\,096$ & $N\!=\!8\,192$ & $N\!=\!16\,384$ \\
\midrule
Unprec.~CG (GPU)             & $497\!\pm\!144$ & $650\!\pm\!240$ & $1153\!\pm\!434$ & $1765\!\pm\!1744$ & $2103\!\pm\!1113$ \\
Jacobi (GPU)                 & $325\!\pm\!83$  & $429\!\pm\!112$ & $839\!\pm\!476$  & $968\!\pm\!261$   & $1543\!\pm\!1137$ \\
IC / DILU (GPU)              & $11\!\pm\!1$    & $15\!\pm\!1$    & $22\!\pm\!4$     & $30\!\pm\!5$      & $40\!\pm\!4$ \\
AMG / AMGX (GPU)             & $1\!\pm\!0$     & $1\!\pm\!0$     & $1\!\pm\!1$      & $2\!\pm\!0$       & $3\!\pm\!0$ \\
Neural SPAI (GPU)~\cite{Yang2025sparse} & $118\!\pm\!6$   & $167\!\pm\!8$   & $246\!\pm\!41$   & $338\!\pm\!24$    & $496\!\pm\!107$ \\
IC (CPU)                     & $59\!\pm\!18$   & $96\!\pm\!23$   & $113\!\pm\!31$   & $170\!\pm\!47$    & $254\!\pm\!61$ \\
AMG (CPU)                    & $5\!\pm\!6$     & $9\!\pm\!6$     & $5\!\pm\!7$      & $7\!\pm\!8$       & $10\!\pm\!8$ \\
\textbf{Ours (GPU)}          & $\mathbf{47\!\pm\!9}$ & $\mathbf{66\!\pm\!16}$ & $\mathbf{80\!\pm\!24}$ & $\mathbf{168\!\pm\!26}$ & $\mathbf{394\!\pm\!68}$ \\
\bottomrule
\end{tabular}
}
\end{table}

\noindent
The neural SPAI baseline is re-trained per scale (one network per $N$) using
the SAI loss of~\cite{Yang2025sparse} on the same training distribution as
ours; the row reports mean$\pm$std across the same per-scale test-frame
evaluation as the other methods.

\section{Generalization Robustness}
\label{supp:generalization}

Table~6 in the main paper reports four cells of a wider train$\,\times\,$eval
grid that probes within-family robustness at $N\!=\!4\,096$.
Table~\ref{supp-tab:generalization} reports the full grid for \emph{our}
method, adding both the in-distribution row (\emph{Full} training,
evaluated on closed-only and high-contrast subsets) and a fourth,
compositional training distribution that excludes closed barriers
\emph{and} restricts contrast to $\rho_{\text{heavy}}\!\in\![5,25]$.

\begin{table}[H]
\centering
\caption{Within-family generalization grid for the proposed method at
$N\!=\!4\,096$ ($\mathrm{rtol}\!=\!10^{-8}$, $20$ frames per cell). Each cell
reports mean$\pm$std PCG iterations; dashes mark cells that are
in-distribution for the corresponding training set (so the entry would
duplicate the leftmost column). The fourth training row is a compositional
distribution that excludes closed barriers \emph{and} restricts contrast to
$\rho_{\text{heavy}}\!\in\![5,25]$; the rightmost column evaluates it on the
joint shift (closed barriers $+$ high contrast).}
\label{supp-tab:generalization}
\footnotesize\setlength{\tabcolsep}{4pt}
\begin{tabular}{@{}l cccc@{}}
\toprule
& \multicolumn{4}{c}{Eval distribution} \\
\cmidrule(lr){2-5}
Train distribution & Full (id) & Closed only & High contrast & Closed $+$ high \\
\midrule
Full                          & $82\!\pm\!28$  & ---            & ---             & ---             \\
Excl.\ closed barriers        & $78\!\pm\!19$  & $68\!\pm\!16$  & ---             & ---             \\
Low contrast                  & $78\!\pm\!26$  & ---            & $142\!\pm\!126$ & ---             \\
Excl.\ closed $+$ low $\rho$  & $95\!\pm\!29$  & $89\!\pm\!20$  & $153\!\pm\!97$  & $147\!\pm\!102$ \\
\midrule
Mean total solve (ms)         & $8.2$--$9.5$   & $7.4$--$9.0$   & $13.1$--$13.9$  & $13.5$          \\
$\Delta_{\text{OOD}}$ (iters) & $1.00$ (def.)  & $0.87$--$0.94$ & $1.82$, $1.61$  & $1.55$          \\
\bottomrule
\end{tabular}
\end{table}

\noindent
$\Delta_{\text{OOD}}$ is the ratio of iterations in the cell to the
in-distribution cell of the same training row (lower is better). Row~1
(\emph{Full}) is included so the in-distribution operating point is on the
same page as the OOD numbers; we omit its trivial OOD columns.

The grid shows that topology shift is inexpensive
($\Delta_{\text{OOD}}\!=\!0.87$--$0.94$), while contrast shift is the dominant
OOD cost ($\Delta_{\text{OOD}}\!=\!1.55$--$1.82$). Even in the compositional
shift cell (closed $+$ high contrast), the model remains at $1.9\times$
Jacobi-relative speedup.

\section{Precision and Hardware Efficiency}
\label{supp:precision}

This section expands the precision and apply-path arguments referenced from main
paper~\S5 and~\S7.1.

\paragraph{Mixed precision split and its motivation.}
The apply path consists of dense batched GEMMs of shapes $(K, L, L)$,
$(M_{\mathcal{H}}, L_s, L_s)$, and $(K, L, L_s)$ (main paper~\S5). At
$L\!=\!128, L_s\!=\!32$ each individual matmul has condition number bounded by
the singular-value spread of its factor (typically $\le\!10^3$ in our trained
models), well within float32's $\sim\!10^7$ representable range. PCG's scalar
accumulators, by contrast, run over the whole residual vector and accumulate
$O(N)$ terms per iteration; on stiff multiphase systems we observed residual
norms drift by $1$--$2$ orders of magnitude over a few hundred iterations when
those accumulators were held at float32, which can cause spurious early
termination at $\mathrm{rtol}\!=\!10^{-8}$. Holding only the scalar
accumulators (dot products, residual norms, step sizes) at float64 eliminates
that drift without paying float64's $2\!\times$ memory/throughput penalty on
the bulk GEMMs. This is the same precision split used in CG implementations in
cuSPARSE and AMGX.

\paragraph{Factor-tensor memory footprint.}
The packed factor tensor of width $P\!=\!KL^2 + M_{\mathcal{H}} L_s^2 + 2NL_s + N$
(main paper~\S4.4) determines the apply-path memory budget. With $L\!=\!128$,
$L_s\!=\!32$, and $M_{\mathcal{H}}\!=\!K\!-\!1$ at $\eta\!=\!1$
weak admissibility, Table~\ref{tab:supp_memory} lists the absolute numbers at
the scales in Table~\ref{tab:full_perf_table}.

\begin{table}[H]
\centering
\caption{Factor-tensor breakdown for the default configuration
($L\!=\!128, L_s\!=\!32, p_{\mathrm{off}}\!=\!4$). $K\!=\!N/L$ is the leaf
count; $M_{\mathcal{H}}\!=\!K\!-\!1$ is the unique off-diagonal tile count
under weak admissibility ($\eta\!=\!1$); $P$ is the total packed width in
float32 elements; ``MB'' is the resulting on-device footprint.}
\label{tab:supp_memory}
\footnotesize\setlength{\tabcolsep}{4pt}
\begin{tabular}{@{}r rrr rr@{}}
\toprule
$N$ & $K$ & $M_{\mathcal{H}}$ & $KL^2$ & $P$ & MB \\
\midrule
$1\,024$  & $8$            & $7$   & $0.131$\,M & $0.20$\,M & $0.82$ \\
$2\,048$  & $16$           & $15$  & $0.262$\,M & $0.41$\,M & $1.64$ \\
$4\,096$  & $32$           & $31$  & $0.524$\,M & $0.82$\,M & $3.29$ \\
$8\,192$  & $64$           & $63$  & $1.049$\,M & $1.65$\,M & $6.58$ \\
$16\,384$ & $128$          & $127$ & $2.097$\,M & $3.29$\,M & $13.2$ \\
\bottomrule
\end{tabular}
\end{table}

\noindent
The factor tensor scales linearly in $N$ at fixed $L$ and stays below
$10$\,MB through $N\!=\!8\,192$; at $N\!=\!16\,384$ it reaches
$13.2$\,MB and still fits comfortably in a single H200's HBM. The dominant term
at every scale is $KL^2$ (the dense diagonal
leaves), not $M_{\mathcal{H}} L_s^2$: distant tiles are compressed so
aggressively that the entire off-diagonal contribution costs less than the
diagonal even though it covers $1\!-\!1/K$ of the matrix.

\paragraph{Apply-path kernel breakdown.}
Main paper Table~3 reports the kernel-family split at $N\!=\!8\,192$: CUTLASS
Tensor-Op GEMMs ($31\%$) plus CUTLASS fused attention ($14\%$) account for
$\sim\!45\%$ of device time; the remaining $\sim\!55\%$ is elementwise
operations, layout reshapes, and small bookkeeping kernels.

\paragraph{What would have to change for tighter tolerance.}
The float32 apply path supports $\mathrm{rtol}\!\le\!10^{-8}$ across all scales
we evaluate. For graphics applications this is two to three orders of
magnitude tighter than typical pressure-projection tolerances and is therefore
sufficient. For scientific applications requiring $\mathrm{rtol}\!\le\!10^{-10}$
or stricter, the apply path would either need a full float64 pass (doubling
device-memory and halving Tensor-Core throughput) or an outer iterative
refinement loop that uses the float32 apply as an inner preconditioner. We did
not implement either since they fall outside the regime we target.

\section{Additional Figures and Tables Referenced from the Main Paper}
\label{supp:floats}

This section provides optional figures and detailed-breakdown tables for
readers who want additional clarity on the method and results. Main-paper
cross-references resolve to the \emph{Figure}/\emph{Table} number shown next to
each caption below.

\begin{figure}[H]
  \centering
  \begin{minipage}[t]{0.49\linewidth}\centering
    \includegraphics[width=\linewidth]{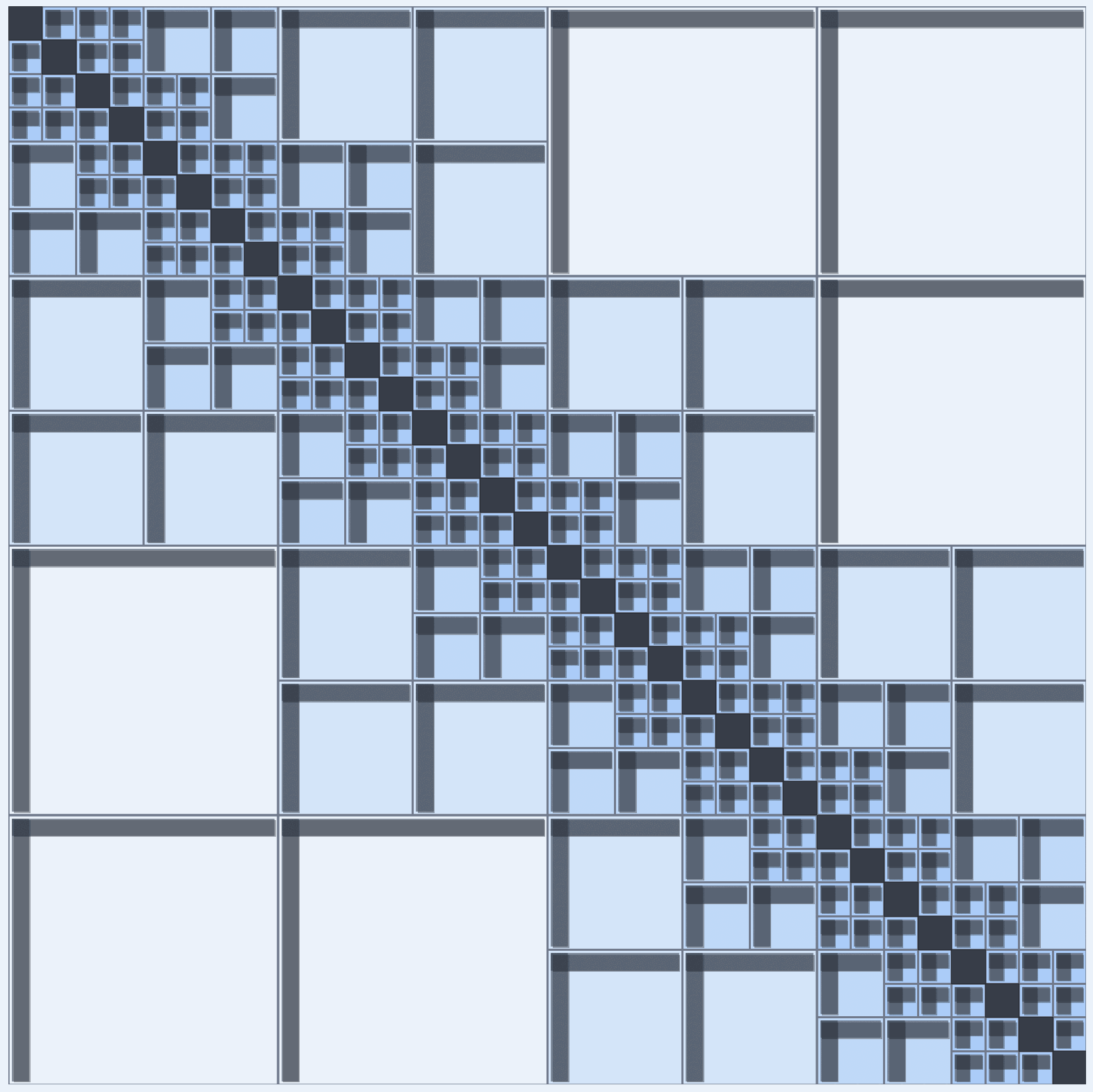}\\
    \small (a) $\mathcal{H}$-matrix partition.
  \end{minipage}\hfill
  \begin{minipage}[t]{0.49\linewidth}\centering
    \includegraphics[width=\linewidth]{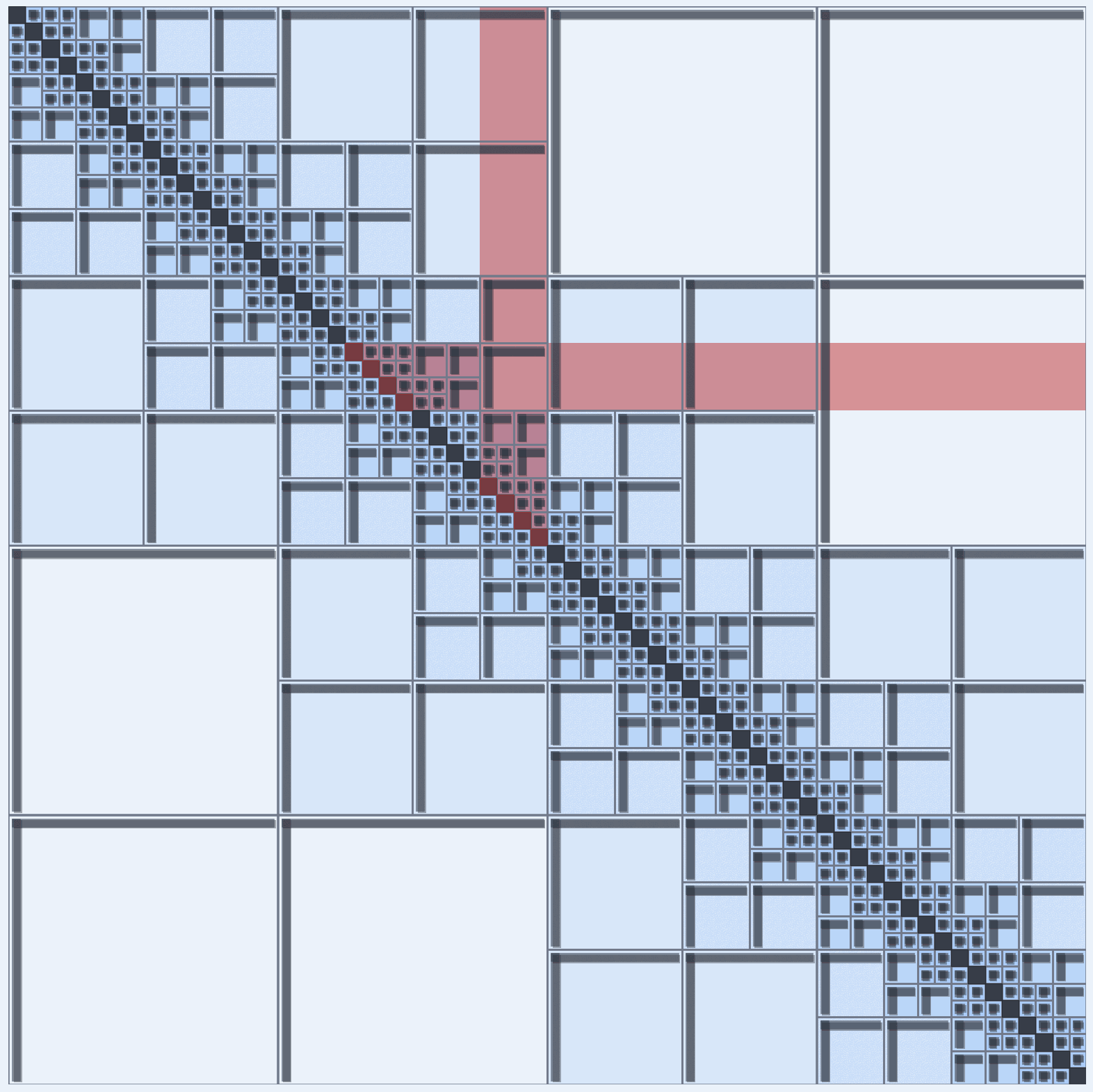}\\
    \small (b) Highway channels.
  \end{minipage}
  \caption{\emph{Left:} weak-admissibility $\mathcal{H}$-matrix partition. Dense
    diagonal leaves along the main diagonal; admissible off-diagonal tiles
    double in size with separation. \emph{Right:} per-layer highway channels.
    One red overlay marks the row- and column-index sets of a representative
    off-diagonal tile; this tile contributes scatter-adds to those strips and
    to a single global token --- four communication channels of dimensions
    2D / 1D / 1D / 0D per layer.}
  \label{supp-fig:partition-and-highways}
\end{figure}

\begin{figure}[H]
  \centering
  \includegraphics[width=0.85\linewidth]{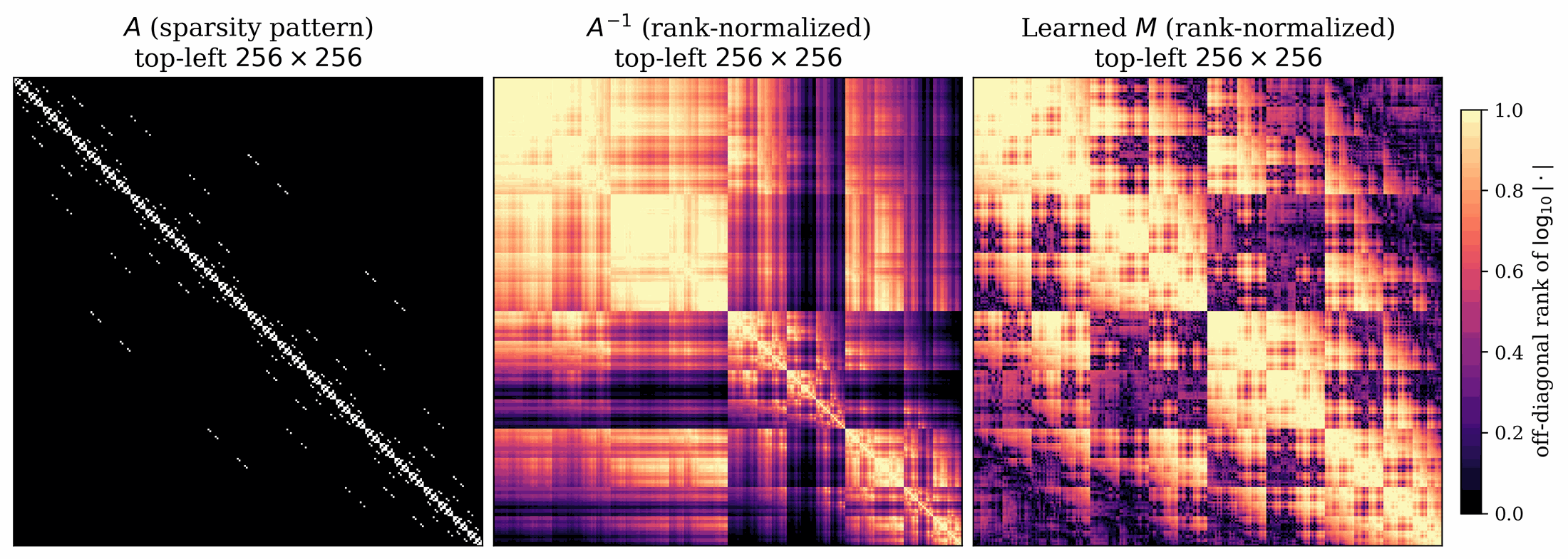}
  \caption{The sparse operator $A$, its dense inverse $A^{-1}$, and the
    assembled learned $M\!\approx\!A^{-1}$ (top-left $256\!\times\!256$)
    on a multiphase pressure-Poisson frame, on a shared rank-normalized
    color scale. All three share the weak-admissibility block-and-tile
    pattern: full-rank diagonal blocks plus off-diagonal tiles whose
    magnitude decays with separation. This pattern is exactly the prior
    quantified by main paper Fig.~3.}
  \label{supp-fig:matrices}
\end{figure}

\begin{figure}[H]
  \centering
  \includegraphics[width=\linewidth]{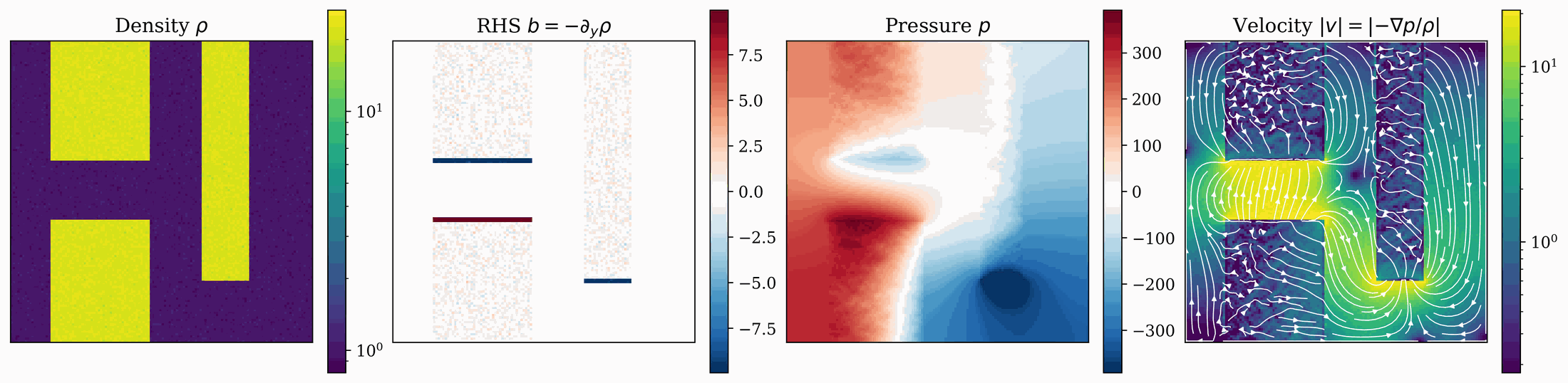}
  \caption{A representative frame from our multiphase pressure-Poisson
    benchmark at $N\!=\!16\,384$. \emph{Left:} heterogeneous density $\rho$ with
    $27\!\times$ contrast, separated by closed and partially-closed barriers
    (yellow). \emph{Middle:} buoyancy-driven right-hand side
    $b\!=\!-\partial_y\rho$, concentrated on horizontal density interfaces.
    \emph{Right:} recovered pressure $p$ with velocity $-\nabla p/\rho$
    overlaid.}
  \label{supp-fig:bench-frame}
\end{figure}

\begin{table}[H]
\centering
\caption{Hyperparameter ablations on multiphase Poisson ($\mathrm{rtol}\!=\!10^{-8}$,
$20$~frames per scale, neural preconditioner only). Train = wall-clock to
convergence; Infer.\ = preconditioner forward pass; Iters = mean PCG iters;
Total = inference $+$ PCG solve; $\Delta$ solve time = fractional change vs.\
the ``---'' row.}
\label{supp-tab:ablation}
\footnotesize\setlength{\tabcolsep}{3.5pt}
\begin{tabular}{@{}l rr rr r@{}}
\toprule
Configuration & Train & Infer. & Iters & Total & $\Delta$ solve \\
              & (min) & (ms)   &       & (ms)  & time \\
\midrule
\multicolumn{6}{l}{\emph{Width} (avg.\ $N\!\in\!\{2048,4096,8192\}$):}\\
$d{=}64,\;n_l{=}3$, hw     &  9.5 & 3.0 & 147 & 15.4 & $+28\%$  \\
$d{=}128,\;n_l{=}3$, hw    & 10.7 & 3.1 & 105 & 12.0 & ---      \\
$d{=}256,\;n_l{=}3$, hw    & 12.3 & 3.2 & 191 & 19.3 & $+61\%$  \\
\midrule
\multicolumn{6}{l}{\emph{Depth} ($N\!=\!8192$):}\\
$d{=}128,\;n_l{=}1$, hw    &  6.1 & 1.3 & 421 & 36.9 & $+106\%$ \\
$d{=}128,\;n_l{=}2$, hw    &  9.6 & 2.3 & 218 & 20.8 & $+16\%$  \\
$d{=}128,\;n_l{=}3$, hw    & 16.2 & 3.4 & 168 & 17.9 & ---      \\
\midrule
\multicolumn{6}{l}{\emph{Highways} ($N\!=\!2048$):}\\
$d{=}128,\;n_l{=}3$, hw    &  6.0 & 3.3 &  66 &  8.8 & ---      \\
$d{=}128,\;n_l{=}3$, no-hw &  4.8 & 2.2 & 149 & 13.8 & $+58\%$  \\
\bottomrule
\end{tabular}
\end{table}

\begin{table}[H]
\centering
\caption{Neural-method runtime components by scale: preconditioner inference
and PCG solve time (ms), plus total time and iterations aligned with the
main-paper performance table. Neural SPAI uses the re-trained-per-scale CUDA
apply path of~\cite{Yang2025sparse}.}
\label{supp-tab:neural_time_components}
\footnotesize\setlength{\tabcolsep}{5pt}
\begin{tabular}{@{}lcccc@{}}
\toprule
Scale $N$ & Infer. (ms) & Solve (ms) & Total (ms) & Iters \\
\midrule
\multicolumn{5}{@{}l}{\textit{Neural SPAI (GPU, CUDA apply)~\cite{Yang2025sparse}}}\\
$1\,024$  & 4.6  & 13.8 & 18.4 & 118 \\
$2\,048$  & 3.8  & 22.2 & 26.0 & 167 \\
$4\,096$  & 6.6  & 31.7 & 38.3 & 246 \\
$8\,192$  & 4.4  & 43.8 & 48.1 & 338 \\
$16\,384$ & 4.9  & 66.0 & 70.9 & 496 \\
\midrule
\multicolumn{5}{@{}l}{\textit{Ours (GPU)}}\\
$1\,024$  & 1.9  & 5.1  & 7.0  & 47  \\
$2\,048$  & 1.9  & 6.9  & 8.8  & 66  \\
$4\,096$  & 2.2  & 7.0  & 9.2  & 80  \\
$8\,192$  & 4.0  & 13.9 & 17.9 & 168 \\
$16\,384$ & 17.0 & 30.6 & 47.6 & 394 \\
\bottomrule
\end{tabular}
\end{table}

\bibliographystyle{ACM-Reference-Format}
\bibliography{references}